\pdfoutput=1

\documentclass[11pt,letterpaper]{article}
\addtolength{\oddsidemargin}{-.375in}
\addtolength{\evensidemargin}{-.375in}
\addtolength{\textwidth}{0.65in} \addtolength{\topmargin}{-.575in}
\addtolength{\textheight}{1.5in}
\addtolength{\floatsep}{-0.75cm}

\renewcommand{\footnoterule}{%
  \kern -3pt
  \hrule width \textwidth height 0.5pt
  \kern 2pt
}

\usepackage{latexsym}
\usepackage{amssymb}
\usepackage{amsmath}
\usepackage{amsthm}
\usepackage{graphicx}
\usepackage{bm}
\usepackage{enumerate}
\usepackage{sectsty}
\usepackage{subfig}
\usepackage{float}
\usepackage{caption}
\usepackage{wasysym}
\usepackage{times,fancyhdr}
\usepackage{mathrsfs}
\usepackage{setspace}



\captionsetup[figure]{font={stretch=0.85}} 

\sectionfont{\normalsize}
\subsectionfont{\small}

\newcommand{\romansubs}{\renewcommand{\theequation}{\theparentequation \roman{equation}}}

\newcommand{\PRLsep}{\noindent\makebox[\linewidth]{\resizebox{0.750\linewidth}{1pt}{$\blacklozenge$}}\bigskip}
\newcommand{\PRLsepsmall}{\noindent\makebox[\linewidth]{\resizebox{0.40\linewidth}{0.5pt}{$\blacklozenge$}}\bigskip}
\newcommand*{\Scale}[2][4]{\scalebox{#1}{$#2$}}%
\newcommand{\ect}{Einstein-Cartan theory }
\newcommand{\ecg}{Einstein-Cartan gravity }
\newcommand{\ectend}{Einstein-Cartan theory}
\newcommand{\ecgend}{Einstein-Cartan gravity}
\newcommand{\weys}{Weyssenhoff }

\newcommand{\tinyrmsub}[1]{\mbox{{\tiny{#1}}}}
\newcommand{\christoffel}[3]
{\ensuremath{\left\{\hspace{-0.15cm}\! 
\begin{array}{l}
{#1} \\ 
{#2}\,{#3}
\end{array}
\!\hspace{-0.15cm}\right\}}}

\newtheorem*{lemma}{Lemma}

\DeclareMathOperator{\sech}{sech}

\begin{document}

\pagestyle{fancy}
\fancyhead{} 
\fancyhead[OR]{\thepage}
\fancyhead[OC]{{\footnotesize{\textsf{ENERGY CONDITION RESPECTING WARP DRIVES IN EINSTEIN-CARTAN THEORY}}}}
\fancyfoot{} 
\renewcommand\headrulewidth{0.5pt}
\addtolength{\headheight}{2pt} 

\title{\bf{\large{\textsf{Energy Condition Respecting Warp Drives: The Role of Spin in Einstein-Cartan Theory}}}}
\author {{\small Andrew DeBenedictis \footnote{adebened@sfu.ca}} \\
\it{\small The Pacific Institute for the Mathematical Sciences} \\
\it{\small and}\\
\it{\small Department of Physics, Simon Fraser University,}\\
\it{\small Burnaby, British Columbia, V5A 1S6, Canada }
\\[-0.1cm]
\PRLsepsmall\\[-0.5cm]
\and
{\small Sa{\v s}a Iliji{\'c} \footnote{sasa.ilijic@fer.hr}} \\
\it{\small Department of Applied Physics, Faculty of Electrical Engineering and Computing, University of Zagreb}\\
\it{\small HR-10000 Zagreb, Unska 3, Croatia }
}
\date{{\small September 7, 2018}}
\maketitle

\setcounter{footnote}{0}
\begin{abstract}
\noindent In this paper we study the so called ``warp drive'' spacetimes within the $U_{4}$ Riemann-Cartan manifolds of \ectend. Specifically, the role that spin may play with respect to energy condition violation is considered. It turns out that with the addition of spin, the torsion terms in \ecg do allow for energy condition respecting warp drives. Limits are derived which minimize the amount of spin required in order to have a weak/null-energy condition respecting system. This is done both for the traditional Alcubierre warp drive as well as for the modified warp drive of Van Den Broeck which minimizes the amount of matter required for the drive. The ship itself is in a region of effectively vacuum and hence the torsion, which in \ect is localized in matter, does not affect the geodesic nature of the ship's trajectory. We also comment on the amount of spin and matter required in order for these conditions to hold.
\end{abstract}
\rule{\linewidth}{0.2mm}
\vspace{-1mm}
\noindent{\small PACS numbers: 04.40.-b}\\
{\small Key words: Einstein-Cartan gravity, spin, energy conditions }\\

\section{{Introduction}}
There is no doubt that general relativity, with its description of a dynamical spacetime, is one of the most fascinating of physical theories. In its over one-hundred year history it has changed our understanding of the universe dramatically. For example, general relativity has provided an explanation for the residual perihelion precession of the planets \cite{ref:perihelion}, and has predicted an expanding universe \cite{ref:expandingstart}, \cite{ref:expandingend}. To date, general relativity, with the introduction of dark matter and dark energy, has passed experimental tests of very high precision \cite{ref:willbook}. With the more recent direct detection of gravitational waves from black hole and neutron star events, the tests of general relativity are no longer restricted merely to the weak-field regime. It can easily be argued, therefore, that general relativity remains a robust theory of gravity. 

Because of these successes, any deviations of gravitational theory from general relativity are highly restricted. If the theory of classical gravity is not general relativity, then it must be very close to it, matching it almost exactly in the regimes in which gravity has been tested accurately. One rather interesting theory is that of \ecg \cite{ref:einstcart}, \cite{ref:einstcart2}. It is arguably the simplest extension of gravity which includes torsion and does not alter the dynamical spacetime picture of general relativity. In fact, the \ect may be viewed simply as general relativity supplemented with torsion. In \ecg the spacetime possesses torsion as well as curvature, and in the limit that torsion vanishes it coincides exactly with general relativity. One reason \ect has not been ruled out is that the effects of torsion in this theory are rather difficult to measure. It is the spin of matter which couples to the antisymmetric part of the connection, and the spin is directly proportional to the modified torsion tensor. Therefore, outside of matter, there are no torsion effects, and gravitation is fully governed by general relativity, although the vacuum solution may now differ somewhat compared to that of pure general relativity, due to the source term having been modified by the spin. The small magnitude of the spin-torsion coupling, along with the fact that experiments validating or ruling out torsion effects must be done within matter possessing significant spin content, means that performing experiments which may invalidate \ecg is very difficult. 

Under extreme conditions, however, it may be that the spin density of matter becomes large enough to produce serious deviations from pure general relativity. For example, in cosmology \ect has been shown to eliminate the big bang singularity \cite{ref:ectbigbang}. As well, \ecg may naturally explain the flatness
and horizon problems, due to the presence of small torsion densities \cite{ref:popinfl}. In the realm of black holes it has been shown that the torsion leads to a non-singular bounce in gravitational collapse that otherwise would lead to a singularity within general relativity, \cite{ref:ziaie},\cite{ref:hashemicollapse}. Regarding more exotic solutions, wormholes have been studied within \ect \cite{ref:bronwh}, \cite{ref:ectwhsols}. Other studies in spacetimes with torsion include Maxwell fields \cite{ref:katkarmax}, Proca fields \cite{ref:seitzproca}, and Dirac fields \cite{ref:ecdiracthesis}. (See also references therein.)

Many of the above studies indicate that torsion in \ect acts as a moderating effect in gravitation. That is, torsion often softens the effects of gravity, eliminating seemingly unphysical effects in gravitational theory. It is with this in mind that we study here what are known in the literature as warp drive spacetimes within \ectend. It is known that within general relativity (GR) warp drive spacetimes must violate energy conditions \cite{ref:alcub}, \cite{ref:lobobook} and therefore it is of interest to study if the energy condition violation may be eliminated by the torsion effects of \ecg. For this we utilize the \weys spin fluid description of matter \cite{ref:weys}, supplemented with spinless auxiliary structure, whose rationale is described later. The \weys description of matter has been rather successful in various studies in \ect \cite{ref:weysstudystart} - \cite{ref:weysstudyend} and it represents a fluid model whose spin content is manifest.

We choose to study the warp drive spacetimes for several reasons. Perhaps the primary reason is that it is generally interesting to study exactly what established theories may predict under extreme situations. There may be much that can be learned from studying the extreme limits of a physical theory. In this vein there is ample pedagogical value in studying exotic solutions to gravitational field theory. This, for example, was a primary motivation in the now classic paper of Morris and Thorne \cite{ref:morthorne} (see also \cite{ref:hiscock}). Also, exotic solutions are interesting to study in their own right, as they serve to illustrate the richness of the solution space of field theories. There is no doubt that the warp drive, though not practically feasible, is an interesting solution to the gravitational field equations in much the same way as, for example, the G\"{o}del universe is \cite{ref:godel}-\cite{ref:godelreview}.


\section{A brief review of the \ect of gravity}\label{sec:ect}
We give here a short review of \ecg and of the \weys fluid. Units will be used such that $G=c=\hbar=1$, and these factors will be reinstated in the final analysis. 

In the microscopic realm of relativistic matter, the rotational sector of the Lorentz group naturally classifies elementary particles within unitary representations of the group's rotations. The spin notion of a particle is therefore just as elementary as its mass. It seems interesting then that matter's spin content, unlike its mass counterpart (and by extension energy and momentum) does not play a role as a source of gravity in general relativity theory. The motivation behind \ect is to eliminate this asymmetry between mass and spin and so also include spin as a source of gravitation. Interestingly, since spin is a fundamental quantum mechanical property of matter, it may be that a quantum theory of gravity must include a spin coupling to gravitation in order to be fully consistent. 

The key ingredient of \ect that causes deviations from Einstein gravity is the presence of non-zero torsion, $T_{\beta\gamma}^{\;\;\;\alpha}$, in the spacetime affine connection $\Gamma^{\alpha}_{\;\beta\gamma}$:
\begin{equation}
 T_{\beta\gamma}^{\;\;\;\alpha}:=\frac{1}{2}\left[\Gamma^{\alpha}_{\;\beta\gamma} - \Gamma^{\alpha}_{\;\gamma\beta} \right]\,. \label{eq:torsion}
\end{equation}
Since in general relativity the symmetric Christoffel connection, $\Gamma^{\alpha}_{\;\beta\gamma}\rightarrow${\tiny{$\christoffel{\alpha}{\beta}{\gamma}$}}, is utilized, torsion identically vanishes everywhere in Einstein gravity. It should be noted that although the connection is not a tensor, the torsion, being a difference of connections, is. Further, in order to make ties with the well tested theory of special relativity, it is demanded that the metric tensor is covariantly constant
\begin{equation}
 \nabla_{\mu}\,g_{\alpha\beta} = 0\, , \label{eq:covconstmet}
\end{equation}
where the covariant derivative in (\ref{eq:covconstmet}) is with respect to the full connection, $\Gamma^{\alpha}_{\;\beta\gamma}$. A manifold in which (\ref{eq:covconstmet}) holds is called a $U_{4}$ manifold. Further, if one restricts the torsion to zero one will have a Riemannian manifold, and if one instead restricts curvature, but not torsion, to zero, one has a Weitzenb\"{o}ck manifold. Here we are concerned with the Riemann-Cartan manifold, $U_{4}$, allowing for both curvature as well as torsion. Whereas upon transport around an infinitesimal closed loop, curvature yields a holonomy in the angle of the vector (see figure \ref{fig:holonomy}a), torsion on the other hand preserves the orientation of the vector but instead induces a holonomy in its translation in the tangent space (see figure \ref{fig:holonomy}b).  

\begin{figure}[h!t]
\begin{center}
\includegraphics[viewport=0 320 1495 865,  width=\textwidth, clip, keepaspectratio=true]{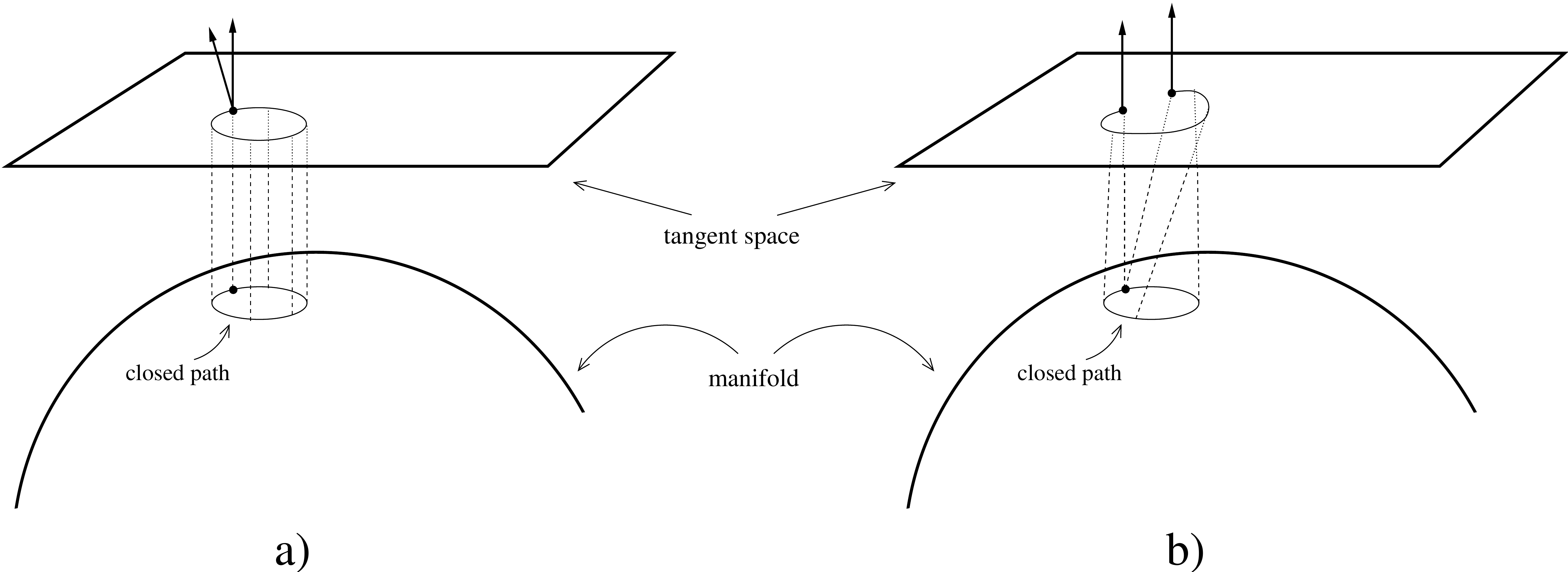}
\caption{{\small{A pictorial description of curvature and torsion: a) When the vector is parallel transported in a closed loop in a manifold with curvature, the vector possesses an angular defect in the tangent space upon return to the original point.\quad b) When the vector is parallel transported in a closed loop in a manifold with torsion, the vector acquires a translational defect in the tangent space upon return to the original point in the manifold. If the manifold possesses both curvature and torsion both an angular defect and a translational defect will be present.}}}
\label{fig:holonomy}
\end{center}
\end{figure}

In \ecg the action is postulated to resemble that of general relativity,
\begin{equation}
 I=\int d^{4}x\,\sqrt{-g} \left[-\frac{1}{2\kappa} R + \mathscr{L}_{\tinyrmsub{m}}\right]\,, \label{eq:action}
\end{equation}
with $\kappa=8\pi G/c^{4}$ (in our units $\kappa = 8 \pi$), $g$ the determinant of the metric tensor, and $\mathscr{L}_{\tinyrmsub{m}}$ the matter Lagrangian density. It should be noted that the curvature scalar $R$ is to be calculated with the full connection of the theory. Specifically, the general connection is given by
\begin{equation}
 \Gamma^{\alpha}_{\;\beta\gamma}={\tiny{\christoffel{\alpha}{\beta}{\gamma}}} - K_{\beta\gamma}^{\;\;\;\alpha}\,, \label{eq:connection}
\end{equation}
where the last quantity is known as the contorsion (sometimes contortion) tensor
\begin{equation}
 K_{\beta\gamma}^{\;\;\;\,\alpha}:=T_{\gamma\;\;\beta}^{\;\,\alpha}-T_{\beta\gamma}^{\;\;\;\,\alpha} - T^{\alpha}_{\;\,\beta\gamma}\,.  \label{eq:contortion}
\end{equation}
Since the connection is not symmetric, the Ricci tensor in \ect will also in general not be symmetric. One must of course assume here that the metric tensor and the connection are independent quantities.

In \ect the equations of motion are derived via variation of (\ref{eq:action}) with respect to the metric and also with respect to the contorsion tensor (\ref{eq:contortion}), which represents the non-metric part of the independent connection. This results in two sets of equations:

\begin{subequations} \romansubs
{\allowdisplaybreaks\begin{align}
G^{\mu\nu} - \left(\nabla_{\alpha} + 2T_{\alpha\lambda}^{\;\;\;\lambda}\right)\left[S^{\mu\nu\alpha} - S^{\nu\alpha\mu} +S^{\alpha\mu\nu}\right] = & \kappa \mathcal{T}^{\mu\nu}\,, \label{eq:eom1} \\[0.1cm]
S^{\alpha}_{\;\beta\gamma}= & \kappa \tau^{\alpha}_{\;\beta\gamma}\,, \label{eq:eom2}
\end{align}}
\end{subequations}
with $G_{\mu\nu}$ the Einstein tensor created out of the full (non-symmetric) connection, and $\mathcal{T}^{\mu\nu}$ the usual (symmetric) stress-energy tensor. The covariant derivative in (\ref{eq:eom1}) is also with respect to the full connection. The above two sets of equations constitute the Einstein-Cartan field equations. It should be mentioned here that the left-hand side of (\ref{eq:eom1}) is actually symmetric overall, as it must be in order to equal its right-hand side. In (\ref{eq:eom2}) the quantity $S^{\alpha}_{\;\beta\gamma}$ represents the modified torsion tensor, sometimes known as the superpotential:
\begin{equation}
 S_{\alpha\beta}^{\;\;\;\;\gamma}:=T_{\alpha\beta}^{\;\;\;\gamma}+\delta^{\gamma}_{\;\alpha}T_{\beta\;\;\,\lambda}^{\;\,\lambda}-\delta^{\gamma}_{\;\beta}T_{\alpha\;\;\,\lambda}^{\;\,\lambda}\,.
\end{equation}
In (\ref{eq:eom2}) there is also present the dynamical spin tensor, $\tau^{\alpha}_{\;\beta\gamma}$. This quantity is the spin analog of the stress-energy tensor. That is, it contains the physical spin content of the theory. 

By substituting equation (\ref{eq:eom2}) in lieu of the superpotential terms in (\ref{eq:eom1}) one may re-write (\ref{eq:eom1}) as 
\begin{equation}
 G^{\mu\nu}=R^{\mu\nu}-\frac{1}{2}R\,g^{\mu\nu}=  \kappa \Theta^{\mu\nu}\,, \label{eq:eomb}
\end{equation}
where the Ricci tensor is the non-symmetric tensor constructed out of the full connection, and $R$ is its trace. $\Theta_{\mu\nu}$ is the non-symmetric \emph{canonical} stress-energy tensor with spin content. The antisymmetric part of (\ref{eq:eomb}) is automatically satisfied and therefore (\ref{eq:eomb}) is equivalent to (\ref{eq:eom1}) and (\ref{eq:eom2}). Explicitly written, with all torsion and modified torsion terms replaced by spin tensors via (\ref{eq:eom2}), the surviving terms in (\ref{eq:eom1}) or (\ref{eq:eomb}) yield the following equation:
\begin{equation}
\Scale[0.90]{G^{\mu\nu}\left({\tiny{\christoffel{\alpha}{\beta}{\gamma}}}\right)-\kappa^{2}\left[\tau^{\lambda\sigma\mu}\tau_{\lambda\sigma}^{\;\;\;\nu} -2 \tau^{\mu\lambda\sigma}\tau^{\nu}_{\;\lambda\sigma} -4 \tau^{\mu\lambda}_{\;\;\;[\sigma}\tau^{\nu\sigma}_{\;\;\;\lambda]} +\frac{1}{2}g^{\mu\nu}\left( 4 \tau_{\rho\;\;\;[\sigma}^{\;\lambda} \tau^{\rho\sigma}_{\;\;\;\lambda]} + \tau^{\rho\lambda\sigma}\tau_{\rho\lambda\sigma}\right) \right] = \kappa \mathcal{T}^{\mu\nu}\,,} \label{eq:goodeom}
\end{equation}
with $G^{\mu\nu}\left({\tiny{\christoffel{\alpha}{\beta}{\gamma}}}\right)$ the Einstein tensor created from the Christoffel connection\footnote{One might expect to find derivatives of $\tau^{\alpha}_{\;\beta\gamma}$ in the resulting equation due to the covariant derivative in (\ref{eq:eom1}). However, these cancel with corresponding derivatives in the full $G^{\mu\nu}$ when the modified torsion terms in $G^{\mu\nu}$ are replaced with the spin tensor via (\ref{eq:eom2}).}. 

In \ect the algebraic structure of equation (\ref{eq:eom2}) dictates that the modified torsion tensor vanishes wherever the spin tensor vanishes. Therefore, any torsion effects are manifest only inside of matter. Outside of matter the theory is equivalent to general relativity. As well, outside of matter, test particles (meaning here particles whose stress-energy and spin may be neglected for gravitational purposes) will follow the usual geodesic equation of general relativity. These properties of \ecg will be particularly desirable for the study here.

The physics of the spin is contained in the tensor $\tau^{\alpha}_{\;\beta\gamma}$, in much the same way as the physics of energy is contained in the stress-energy tensor. One may, for example, prescribe certain components of $\tau^{\alpha}_{\;\beta\gamma}$ from some reasonable physical demands, being careful not to prescribe more components than the number of independent equations allow. Alternatively, one may resort to other physical theories, such as the theory of Dirac particles in order to construct a spin tensor out of Dirac spinors. In this work we will utilize the \weys fluid description of matter along with a supplementary structure. As mentioned in the introduction, the \weys fluid has been used in a number of interesting studies in \ect \cite{ref:weysstudystart} - \cite{ref:weysstudyend}. This model represents a fluid whose elements possess net (intrinsic) spin as well as stress-energy content. 

It is useful to construct the spin tensor via a second-rank tensor, $\tau_{\alpha\beta}$, as:
\begin{equation}
 \tau_{\alpha\beta}^{\;\;\;\;\gamma}= \tau_{\alpha\beta}u^{\gamma}\, \label{eq:secondspin}
\end{equation}
with $u^{\gamma}$ the 4-velocity of the fluid. The tensor $\tau_{\alpha\beta}$ is often known as the spin density. It is antisymmetric and is often subject to the restriction
\begin{equation}
 \tau_{\alpha\beta}u^{\beta}=0\,. \label{eq:frenkel}
\end{equation}
This last equation is often referred to as the Frenkel condition \cite{ref:frenkel}. It encodes a statement about the spacelike nature of spin. There is some debate on the suitability of enforcing the Frenkel condition when it comes to cosmological applications \cite{ref:frenkeldebate}. Strictly speaking it shall not be relevant for the calculations here as the study here is not within that realm.

In the simplest of scenarios the matter will be unpolarized. That is, the spins would be oriented randomly. This implies that the average of the spin density tensor would vanish; i.e. $\langle \tau_{\alpha\beta}\rangle=0$, as well as its gradients. However, there are spin contributions in (\ref{eq:goodeom}) which are quadratic in the spin, and it is generally not true that $\langle \tau_{\alpha\beta}\tau^{\alpha\beta}\rangle=0$. Hence, the quadratic contributions from spin in a macroscopic average will still contribute to the Einstein-Cartan field equations \cite{ref:hehl}. This therefore allows one to write equation (\ref{eq:goodeom}) for the \weys fluid as \cite{ref:hehl}, \cite{ref:gasperiniprl}
\begin{equation}
 G^{\mu\nu}\left({\tiny{\christoffel{\alpha}{\beta}{\gamma}}}\right)-\kappa^{2}s^2\left(-2 u^{\mu}u^{\nu}-g^{\mu\nu}\right)=\kappa \mathcal{T}^{\mu\nu}\,, \label{eq:ssquaredeqn}
\end{equation}
where here the notation $s^{2}:=\tau_{\alpha\beta}\tau^{\alpha\beta}$ has been employed. The vector $u^{\mu}$ represents the local 4-velocity of the fluid. We take it to have the same functional form as that of the ship. 

A few comments are in order before proceeding. First, the spin contribution, $s^{2}$, is in principle prescribable. However, reasonable physics dictates that the larger the particle content, the larger the $s^{2}$ contribution. Therefore it may be desirable to make $s^{2}$ proportional to the fluid energy density. Second, the stress-energy tensor of a fluid is algebraically incompatible with the symmetries required for the warp drive metrics. Therefore, one cannot simply use a perfect fluid (or even an anisotropic fluid) stress-energy tensor on the right-hand side of (\ref{eq:ssquaredeqn}). There must be some spinless auxiliary structure to the matter in order to bring the algebraic class of the right-hand side to compatibility with the left-hand side. In the analysis below we leave $\mathcal{T}^{\mu\nu}$ free. In other words, it is whatever is required in order to create the warp drive. It will in general be algebraically decomposable as Segre characteristic
\begin{equation}
\left[1,(1,1,1)\right] + \mbox{aux}\,,
\end{equation}
where ``aux'' represents the residual algebraic structure (non fluid structure) of the left-hand side of the equation (\ref{eq:ssquaredeqn}). That is, the net stress-energy content is that of the spin fluid plus any auxiliary matter required for algebraic compatibility. 

\section{The warp drive in \ect}\label{sec:warpdrive}

\subsection{The traditional warp drive}\label{sec:alcubierre}
We will first analyze the original warp drive of Alcubierre but within \ect \cite{ref:alcub}. This is arguably the most studied of such metrics. Its line element takes the form
\begin{equation}
 {\rm d}s^{2}=-{\rm d}t^{2} +\left[{\rm d}z -v_{\rm{s}}(t) f(x,y,z-z_{\rm{s}}(t))\,{\rm d}t\right]^{2} + {\rm d}x^{2} + {\rm d}y^{2}\,. \label{eq:alcumetric}
\end{equation}
Here the quantity $v_{\rm{s}}(t)$ represents the coordinate velocity of the ship, $v_{\rm{s}}(t)={\rm d}z_{\rm{s}}(t)/{\rm d}t$, so that the ship is moving in the $z$ direction. Due to the complexity of many of the expressions required for calculation we will consider the ship velocity to be constant. The fact that energy conditions may be respected even in an accelerating scenario can be deduced from the lemma below. The Christoffel-Einstein tensor, $G_{\mu\nu}\left({\tiny{\christoffel{\alpha}{\beta}{\gamma}}}\right)$, constructed out of this metric is presented in the Appendix.  The function $f(x,y,z-z_{\rm{s}}(t))$ is required to be ``top-hat'' like with a value of $0$ outside the warp bubble, and a value of $1$ inside. Specifically, Alcubierre chose
\begin{equation}
 f\left(r_{\rm{s}}(t)\right)=\frac{\tanh\left[\sigma\left(r_{\rm{s}}(t)+P\right)\right] - \tanh\left[\sigma\left(r_{\rm{s}}(t)-P\right)\right]}{2\tanh(\sigma P)}\,, \label{eq:alcuf}
\end{equation}
where $\Scale[0.90]{r_{\rm{s}}(t)=\left\{x^{2}+y^{2}+ \left[z-z_{\rm{s}}(t)\right]^{2}\right\}^{1/2}}$, $P$ is the ``radius'' of the warp bubble, and $\sigma$ is a parameter which controls how close $f\left(r_{\rm{s}}(t)\right)$ is to a true top-hat function. In this version of the warp drive there are contracting and expanding volume elements near the ship. However, it should be stressed that this is simply a by product of the metric (\ref{eq:alcumetric}). The contraction has little to do with the arbitrarily high velocity of the warp bubble. The ship does not reside in that region of the spacetime and, in fact, by a modification one may construct a similar warp drive without the contraction of the volume elements \cite{ref:noncontract}.

Staying within the paradigm of general relativity for the moment, it is easy to see that the spacetime generated by the metric (\ref{eq:alcumetric}) violates the weak energy condition (WEC). To see this let us consider observers in free-fall whose 4-velocity is given by
\begin{equation}
 [u^{\mu}]=\left[1,\,0,\,0, v_{s} f(r_{\rm{s}}(t))\right]\,. \label{eq:4vel}
\end{equation}
In accordance with the literature we will refer to such observers as Eulerian. As long as the observer is a test particle (meaning his/her stress-energy and, in \ecg also spin structure, may be neglected), this observer will correspond to one in free-fall.
One may calculate the following quantity relevant to the WEC:
\begin{equation} 
 G_{\mu\nu}u^{\mu}u^{\nu}=\kappa \mathcal{T}_{\mu\nu}u^{\mu}u^{\nu}=-\frac{v_{s}^{2}}{4} \left[ \left(\partial_{x}f\right)^{2} + \left(\partial_{y}f\right)^{2} \right]\,, \label{eq:alcuecond}
\end{equation}
using (\ref{eq:4vel}) and of course the Christoffel connection for $G_{\mu\nu}$, as we are currently working within general relativity. Note that the right-hand side of (\ref{eq:alcuecond}) is non-positive, and hence a negative energy density will be measured by the free-fall observers. The distribution of this energy density, $\mathcal{T}_{\mu\nu}u^{\mu}u^{\nu}$, is depicted in figure \ref{fig:alcurho}. One may minimize the volume of exotic matter required by selecting parameters in $f$ so that the function is as close to a top-hat as reasonably possible. However, then the derivatives in (\ref{eq:alcuecond}) become large, so although the volume of the WEC violating region is minimized, the severity of the violation in the region is increased.

\begin{figure}[h!t]
\begin{center}
\includegraphics[width=3in]{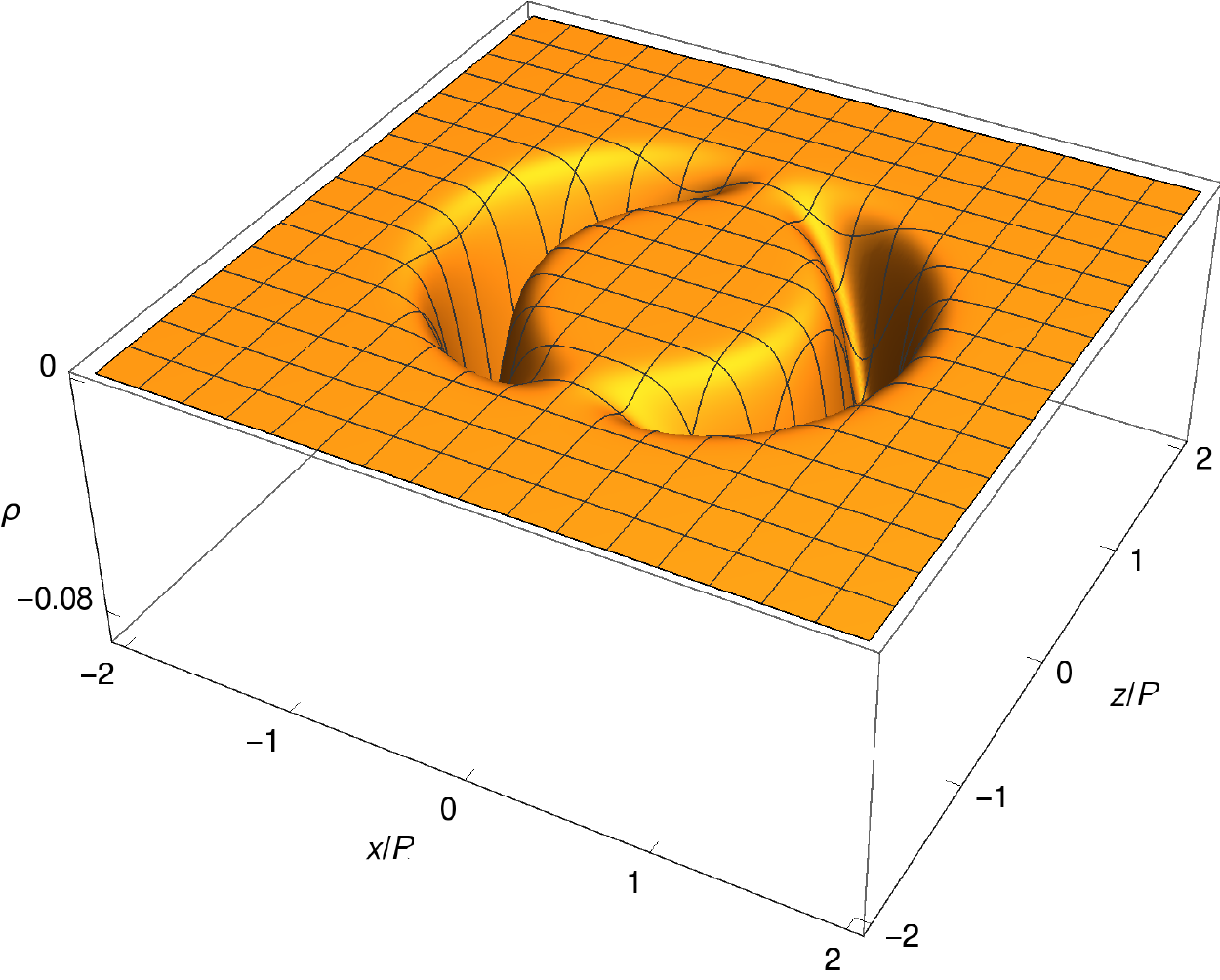}
\caption{{\small{A plot of the energy density as measured by freely falling observers in the original warp drive. This is the general relativity result on a $t=$constant slice and with the $y$ direction omitted. Note that there is energy condition violation wherever the plot is negative.}}}
\label{fig:alcurho}
\end{center}
\end{figure}

Next we wish to analyze the original warp drive within the context of \ectend. One potential issue is that, in general, the motion of free-falling particles in \ecg does not coincide with the geodesic equation. The torsion in general will couple to any spin the test particle may possess, altering its trajectory \cite{ref:pereirabook}. (The trajectory will be neither geodesic nor an autoparallel of the $U_{4}$ spacetime generally.) Strictly speaking, in order for the test particle (ship) to move according to the geodesic equation, the spin of the test particle must be small (required so that locally created torsion can be ignored, as required for a test particle). Also, ideally the contorsion of the spacetime should vanish in the vicinity of the test particle \cite{ref:handwrittenequations}. This issue will be circumvented by demanding that the solution possesses no matter (or at least very little matter) in the vicinity of the ship, and therefore the test particle is located in a region where there is no torsion, and hence general relativity, and the geodesic equation of test particles, holds. The matter field which is responsible for the warp bubble will, of course, not be in vacuum, but since it is not a test particle but part of the solution of the field equations, its distribution will be whatever is required by the field equations in order to produce metric (\ref{eq:alcumetric}) and eliminate energy condition violation. Its 4-velocity, although mimicking that of the ship in functional form, is not one of free-fall.

In \ecg there is the extra degree of freedom introduced from the presence of torsion, via the spin. This extra degree of freedom is manifest in $s^{2}$ and may be utilized in order to attempt to eliminate energy condition violation throughout the spacetime. We will concentrate our analysis on the WEC/null energy condition (NEC) specifically, which we will refer to as the WEC for simplicity, as by simple extension to null vectors the WEC can include the NEC. The WEC/NEC stipulates that for any timelike/null vector $v^{{\mu}}$, the following weak inequality must hold:
\begin{equation}
 \mathcal{T}_{{\mu}{\nu}}v^{{\mu}}v^{{\nu}} \geq 0\qquad \forall\quad v^{{\mu}}v_{{\mu}}=-1,\, 0 \,, \label{eq:wec}
\end{equation}
and this inequality will be rewritten using (\ref{eq:ssquaredeqn}) as
\begin{equation}
\mathcal{T}_{\mu\nu}v^{\mu}v^{\nu}=\left(\frac{1}{\kappa}G_{{\mu}{\nu}}\left({\tiny{\christoffel{\alpha}{\beta}{\gamma}}}\right) -\kappa s^2\left(-2 u_{{\mu}}u_{{\nu}}-g_{{\mu}{\nu}}\right) \right) {v}^{{\mu}}{v}^{{\nu}}\,. \label{eq:econd2a}
\end{equation}
We know that where the WEC is violated in general relativity  that the first term on the right-hand side will be negative. Therefore, we wish to prove that the remaining terms on the right-hand side can always be made sufficiently positive in order to negate the WEC violation in general relativity. The proof that the WEC can always be respected is facilitated by the following simple lemma:
\begin{lemma}
Define $\chi_{\mu\nu}:=\left(2 u_{{\mu}}u_{{\nu}}+g_{{\mu}{\nu}}\right)$ with $u^{\mu}$ a normalized future-pointing timelike vector field. Then, for $v^{\mu}$ a normalized timelike vector isochronous with $u^{\mu}$, the quantity $\chi_{\mu\nu}v^{\mu}v^{\nu}$ is strictly positive.
\end{lemma}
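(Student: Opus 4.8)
\noindent The plan is to recognize that $\chi_{\mu\nu}$ is nothing but the positive-definite (Euclidean) metric of the local rest frame of $u^{\mu}$. Concretely, I would first pass, at an arbitrary point, to an orthonormal frame adapted to $u^{\mu}$, i.e. one in which $u^{\mu}=(1,0,0,0)$; such a frame exists at every point precisely because $u^{\mu}$ is a unit, future-pointing, timelike field. In this frame $u_{\mu}=(-1,0,0,0)$ and $g_{\mu\nu}=\mathrm{diag}(-1,1,1,1)$, so
\[
\chi_{\mu\nu}=2\,u_{\mu}u_{\nu}+g_{\mu\nu}=\mathrm{diag}(2,0,0,0)+\mathrm{diag}(-1,1,1,1)=\mathrm{diag}(1,1,1,1).
\]
Since positive-definiteness of a symmetric bilinear form is frame independent, this already shows $\chi_{\mu\nu}w^{\mu}w^{\nu}>0$ for \emph{every} nonzero vector $w^{\mu}$ --- in particular for the normalized timelike $v^{\mu}$ of the statement --- which is the assertion.

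\noindent For the sharper, quantitative version I would simply contract directly: using $g_{\mu\nu}v^{\mu}v^{\nu}=-1$,
\[
\chi_{\mu\nu}v^{\mu}v^{\nu}=2\,(u_{\mu}v^{\mu})^{2}+g_{\mu\nu}v^{\mu}v^{\nu}=2\,(u_{\mu}v^{\mu})^{2}-1,
\]
so the claim reduces to the reversed (``wrong-way'') Cauchy--Schwarz inequality for timelike vectors, $(u_{\mu}v^{\mu})^{2}\ge 1$. That inequality follows immediately in the adapted frame above: writing $v^{\mu}=(v^{0},\vec v\,)$, normalization gives $(v^{0})^{2}=1+|\vec v\,|^{2}\ge 1$, while $u_{\mu}v^{\mu}=-v^{0}$, whence $(u_{\mu}v^{\mu})^{2}=(v^{0})^{2}\ge 1$, with equality only when $v^{\mu}\parallel u^{\mu}$. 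Combining, $\chi_{\mu\nu}v^{\mu}v^{\nu}\ge 1>0$. The isochronicity hypothesis is what fixes the sign $u_{\mu}v^{\mu}=-v^{0}<0$ (both vectors in the same light cone); it is not strictly needed for the squared quantity, and neither is timelikeness for the bare positivity statement, but both are the natural restrictions for the intended application to Eulerian observers in $(\ref{eq:econd2a})$.

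\noindent I do not expect any genuine obstacle here: the only substantive ingredient is the reversed Cauchy--Schwarz inequality (equivalently, positive-definiteness of the $u$-rest-space metric), and the only care required is (i) invoking the pointwise existence of an orthonormal frame aligned with $u^{\mu}$, and (ii) noting that the degenerate case $v^{\mu}\parallel u^{\mu}$ still yields the value $1$, so strict positivity never degrades. It is worth recording, as the computation makes plain, that $\chi_{\mu\nu}v^{\mu}v^{\nu}$ is bounded below by $1$ and not merely by $0$; this uniform gap is what will let the $s^{2}$ term in $(\ref{eq:ssquaredeqn})$ be dialed up so as to overcome the (bounded) general-relativistic WEC violation everywhere on the relevant region.
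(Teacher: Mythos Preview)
Your proof is correct and, in its second paragraph, follows exactly the paper's route: write $\chi_{\mu\nu}v^{\mu}v^{\nu}=2(u_{\mu}v^{\mu})^{2}-1$ and then invoke the reversed Cauchy--Schwarz inequality $(u_{\mu}v^{\mu})^{2}\ge 1$ for unit timelike vectors. The only cosmetic difference is that the paper cites the timelike Cauchy--Schwarz inequality as a known fact, whereas you prove it on the spot by passing to the $u$-adapted orthonormal frame.

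Your first paragraph, however, offers a genuinely cleaner and more general alternative: by diagonalizing $\chi_{\mu\nu}$ in the rest frame of $u^{\mu}$ you exhibit it as the identity, hence positive definite on \emph{all} nonzero vectors, not just timelike isochronous ones. This buys you the null case (which the paper handles separately by remark) and the observation that isochronicity is irrelevant for the squared quantity---both points you correctly flag. You also extract the sharper bound $\chi_{\mu\nu}v^{\mu}v^{\nu}\ge 1$, which the paper's argument implies but does not state; this uniform lower bound is indeed what makes the $s^{2}$ term in (\ref{eq:econd2a}) effective, as you note.
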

\begin{proof}
We have the expression
\begin{equation}
\chi_{\mu\nu}v^{\mu}v^{\nu}=2 u_{{\mu}}u_{{\nu}}v^{\mu}v^{\nu}+g_{{\mu}{\nu}}v^{\mu}v^{\nu}\,, \nonumber
\end{equation}
which due to the normalized nature of the vectors may be written as
\begin{equation}
2 \left(u_{{\mu}}v^{{\mu}}\right)^{2}-1\,. \label{eq:chimunu}
\end{equation}
The Cauchy-Schwarz inequality for timelike vectors states that
\begin{equation}
-g_{\mu\nu}u^{\mu}v^{\nu} \geq \sqrt{g_{\alpha\beta}u^{\alpha}u^{\beta}\: g_{\sigma\rho}v^{\sigma}v^{\rho}}\,, \nonumber
\end{equation}
which, due to the normalization of the vectors may be re-written as
\begin{equation}
-g_{\mu\nu}u^{\mu}v^{\nu} \geq 1\,. \nonumber
\end{equation}
The left-hand side of the weak inequality above is obviously positive, due to the timelike nature of $u^{\mu}$ and $v^{\mu}$ and the fact that they are isochronous (and of course also the fact it obeys the inequality). Therefore we conclude that $(u_{\mu}v^{\mu})^{2} \geq 1$. This then renders (\ref{eq:chimunu}) strictly positive and thus the assertion that the WEC (\ref{eq:econd2a}) may be made positive for sufficiently large $s^{2}$ is proven.
\end{proof}
\noindent The extension to show that $\chi_{\mu\nu}v^{\mu}v^{\nu}$ is positive for null $v^{\mu}$  is straightforward.

Although we have shown that the WEC can be satisfied, from a physical perspective it is desirable to accomplish this with the minimum amount of spin. That is, with the smallest $s^{2}$ allowable. For this we consider as before the left-hand side (l.h.s) of equation (\ref{eq:ssquaredeqn}) as:
\begin{equation}
 \mathcal{T}_{\hat{\mu}\hat{\nu}}=\frac{1}{\kappa}\left(\mbox{l.h.s.}\right)_{\alpha\beta} e_{\hat{\mu}}^{\;\,\alpha} e_{\hat{\nu}}^{\;\,\beta}\,, \label{eq:stressprojection}
\end{equation}
where we will perform the calculation in the orthonormal frame (indicated by hatted indices). Here $e_{\hat{\mu}}^{\;\alpha}$ indicate the components of the locally orthonormal tetrad, which we pick adapted to the motion as
\begin{equation}
\left[e^{\hat \mu}{}_{\alpha}\right] = \left[ \begin{array}{cccc}
 1 & 0 & 0 & 0 \\
 0 & 1 & 0 & 0 \\
 0 & 0 & 1 & 0 \\
 - v_s f & 0 & 0 & 1
\end{array} \right]\;,
\qquad
\left[e_{\hat \mu}{}^{\alpha}\right] = \left[ \begin{array}{cccc}
 1 & 0 & 0 & v_s f \\
 0 & 1 & 0 & 0 \\
 0 & 0 & 1 & 0 \\
 0 & 0 & 0 & 1
\end{array} \right]\,,
\end{equation}
although strictly speaking, since the analysis below will be in terms of invariants, any frame would be sufficient.

In order to make the analysis of the WEC more tractable, the following parameterization for the observer 4-velocity can be chosen, which respects the condition $v^{\hat{\mu}}v_{\hat{\mu}}=-1$ without loss of generality
\begin{equation} \label{eq:paramvel}
[\,v^{\hat\mu}\,] = [\, \cosh\beta,\;\sinh\beta\sin\theta\cos\phi,\;
\sinh\beta\sin\theta\sin\phi,\;\sinh\beta\cos\theta\,]\,,
\end{equation}
where
$\beta \in \mathbb{R}$, $\theta\in(0,\,\pi)$, and $\phi \in (-\pi,\,\pi)$.

To limit the amount of spin required in order to respect the WEC, first a specific trajectory is chosen in which to calculate the following quantity throughout the spacetime:
\begin{equation}
  \left(\frac{1}{\kappa}G_{\hat{\mu}\hat{\nu}}\left({\tiny{\christoffel{\alpha}{\beta}{\gamma}}}\right) -\kappa s^2\left(-2 u_{\hat{\mu}}u_{\hat{\nu}}-g_{\hat{\mu}\hat{\nu}}\right) \right) \tilde{v}^{\hat{\mu}}\tilde{v}^{\hat{\nu}}\,. \label{eq:econd2}
\end{equation}
Here the tilde indicates that $\tilde{v}^{\hat{\alpha}}$ is a specific 4-velocity. Equation (\ref{eq:econd2}) is, via (\ref{eq:ssquaredeqn}), the quantity which is required when the energy condition is to be calculated. This expression must be non-negative for all timelike ${v}^{\hat{\alpha}}$ in order for the WEC to be respected. The first term in (\ref{eq:econd2}) is completely determined from the metric (\ref{eq:alcumetric}), save for the 4-velocities, and since it is what goes into the WEC for general relativity, this term will in general be negative for warp drive metrics. If it is negative, then $s^{2}$ is to be set such that (\ref{eq:econd2}) vanishes everywhere. This needs to be done specifically for the vector $\tilde{v}^{\hat{\mu}}$ which produces the most severe negative result in general relativity at any given point in the spacetime. In other words, one chooses an $s^{2}$ so that energy condition inequality violation in general relativity, for the observer who measures this violation most severely at a certain point, is canceled by the spin terms which arise in \ecgend. This is to be done at every point in the spacetime, and may generally involve a different $\tilde{v}^{\hat{\mu}}$ vector at each spacetime point. It should be noted that in regions of the spacetime where the WEC would be violated within general relativity, this procedure will yield zero for the energy density as measured by the observer with 4-velocity $\tilde{v}^{\hat{\mu}}$ in \ecgend. Mathematically this is allowed. However, as mentioned previously, it is unphysical that spin should be present in the absence of matter. Hence, the procedure just described yields the absolute lower limit on $s^{2}$ which is capable of generating an energy condition respecting warp drive in \ectend. From a more physical perspective, one must actually increase the value of $s^{2}$, at least slightly, from this minimum value in order for the observer to measure a non-zero (and positive) energy density.

In principle, finding the minimum $s^{2}$ required (generally corresponding to the greatest WEC violation within general relativity) can be done via the standard extremization technique. One first sets the function (\ref{eq:econd2})$=0$, and solves for $s^{2}$ as a function of $\beta$, $\theta$ and $\phi$ (and, of course, the coordinates). Let us call this function $\tilde{S}:=s^{2}$; the value of $s^{2}$ which yields zero for (\ref{eq:econd2}). One then evaluates the gradients
\begin{equation}
 \frac{\partial \tilde{S}}{\partial \beta} =0,\quad \frac{\partial \tilde{S}}{\partial \theta} =0, \quad \frac{\partial \tilde{S}}{\partial \phi} =0\,, \label{eq:zerograd}
\end{equation}
and simultaneously solves these equations for $\beta$, $\theta$, and $\phi$. Now one will have critical values of $\beta(t,\,x,\,y,\,z)$, $\theta(t,\,x,\,y,\,z)$, and $\phi(t,\,x,\,y,\,z)$. At these values the function $\tilde{S}$ will be some sort of extremum. One wishes to find which ones are \emph{maxima} (as this corresponds to the vector $v^{\hat{\mu}}$ which produces the most negative result in the GR WEC). This is done by forming the Hessian
\begin{equation}
 H=\begin{bmatrix}
    \frac{\partial^{2} \tilde{S}}{\partial \beta^{2}}      & \frac{\partial^{2} \tilde{S}}{\partial \beta\,\partial\theta}&   \frac{\partial^{2} \tilde{S}}{\partial \beta \,\partial\phi}\\[0.1cm]
    \frac{\partial^{2} \tilde{S}}{\partial \theta \, \partial \beta}      & \frac{\partial^{2} \tilde{S}}{\partial\theta^{2}}&   \frac{\partial^{2} \tilde{S}}{\partial \theta \,\partial\phi}\\[0.1cm] 
    \frac{\partial^{2} \tilde{S}}{\partial \phi \, \partial \beta}      & \frac{\partial^{2} \tilde{S}}{\partial\phi\, \partial\theta}&   \frac{\partial^{2} \tilde{S}}{\partial \phi^{2}}
\end{bmatrix}_{|\mbox{\small{cp}}}\;, \label{eq:hessian}
\end{equation}
where ``cp'' indicates at the critical points, and by studying the determinants of the principal minors:
\begin{equation}
h_{1}:= \frac{\partial^{2} \tilde{S}}{\partial \beta^{2}}_{|\mbox{\small{cp}}}, \quad  h_{2}:=\begin{vmatrix}
    \frac{\partial^{2} \tilde{S}}{\partial \beta^{2}}      & \frac{\partial^{2} \tilde{S}}{\partial \beta \, \partial\theta}\\[0.1cm]
    \frac{\partial^{2} \tilde{S}}{\partial \theta \, \partial \beta}      & \frac{\partial^{2} \tilde{S}}{\partial\theta^{2}}
\end{vmatrix}_{|\mbox{\small{cp}}},  \quad h_{3}:=\left|H\right|_{|\mbox{\small{cp}}}\,.
\end{equation}
If $h_{1}$, $h_{2}$ and $h_{3}$ alternate between positive and negative at a critical point, then that critical point is a maximum. If a boundary is present it remains to be checked separately. 

The above procedure may be implemented in principle. However, in practice the warp drive spacetime is rather complicated, and the expression (\ref{eq:econd2}) is rather unwieldy. The lengths of the resulting expressions obscure any chance of reasonable analysis. Instead we are forced to resort to a somewhat simplified scenario. In other words we choose a special class of observers and calculate the spin required to respect the WEC inequality for this class of observers. Since the spacetime is $x$, $y$ symmetric, we will choose the class of observers boosted in the $z$ direction. In this scenario the 4 velocities (\ref{eq:paramvel}) simplify to
\begin{equation}
 [\,v^{\hat{\mu}}\,]
  = [\,\cosh \beta,\, 0,\, 0,\, \sinh \beta\,]\,. \label{eq:zvel}
\end{equation}
We will find the value of $\beta$ which produces the most serious WEC violation in general relativity, and construct $s^{2}$ such that it just cancels out this violation (keeping in mind the comment earlier that $s^{2}$ should be at least slightly larger than this). Solving for $\kappa^{2}s^{2}$ under this assumption yields
\begin{align}
 \kappa^{2}s^{2} = & - \frac{
    G_{\hat\mu\hat\nu}\left({\tiny{\christoffel{\alpha}{\beta}{\gamma}}}\right)
    v^{\hat{\mu}}v^{\hat{\nu}}
 }{
    2(u_{\hat{\alpha}}v^{\hat{\alpha}})^{2} - 1} \nonumber \\
= & \mbox{} \frac{v_{\rm{s}}^{2}}{4} \left( 2 - \sech(2\beta) \right)
    \left((\partial_{x}f)^{2} + (\partial_{y}f)^{2}\right)
    - \frac{v_{\rm{s}}}{2} \tanh(2\beta) \left(\partial^{2}_{x}f + \partial_{y}^{2}f\right)\,. \label{eq:spinexpr}
\end{align}
Setting the derivative with respect to $\beta$ equal to zero to find the extrema yields
\begin{equation}
 \sinh(2\beta)_{|\mbox{\small{cp}}} =
\frac{2(\partial_{x}^{2}f +\partial_{y}^{2}f)}{v_{\rm{s}}
\left((\partial_{x}f)^{2} + (\partial_{y}f)^{2}\right)}\,
\end{equation}
which gives the extrema of $\kappa^{2}s^{2}$ as
\begin{equation}
 \kappa^{2}s^{2} = \frac{v_{\rm{s}}^{2}}{2} \left(
(\partial_{x} f)^{2}+(\partial_{y} f)^{2}
- \sqrt{
  \frac14 \big( (\partial_{x} f)^{2}+(\partial_{y} f)^{2} \big)^2
+ \frac1{v_s^2} \big( \partial^{2}_{x}f + \partial^{2}_{y}f \big)^2
} \right) \,.
\end{equation}
The boundary points, $\beta\rightarrow \pm \infty$,
need to be checked independently and yield
\begin{equation}
 \kappa^{2}s^{2}{}_{|\beta\rightarrow \pm \infty} =
\frac{v_{\rm{s}}^{2}}{2}\left((\partial_{x}f)^{2} +(\partial_{y}f)^{2}\right)^{2}
\mp \frac{v_{\rm{s}}}{2} \left(\partial^{2}_{x}f +\partial^{2}_{y}f\right)\,. \label{eq:spinboundary}
\end{equation}
It turns out that the maxima occur on the boundary, $\beta\rightarrow \pm \infty$. Using the values of (\ref{eq:spinboundary}) gives the \emph{minimum} spin required in order to cancel out GR WEC violation for all observers boosted in the $z$ direction. Again we stress that in a physical situation, the spin should be at least slightly larger than this. We plot this spin in the vicinity of the ship in figure \ref{fig:alcussquared} for $v_{\rm{s}}=1.2$ and $\sigma=5$ inverse length units.
The resulting energy density, as measured by Eulerian observers using this value of $\kappa^{2}s^{2}$, is plotted in figure \ref{fig:energydens}. Note that this is everywhere non-negative. (It will be so for all observers by construction.)
\begin{figure}[h!t]
\begin{center}
\includegraphics[width=3in]{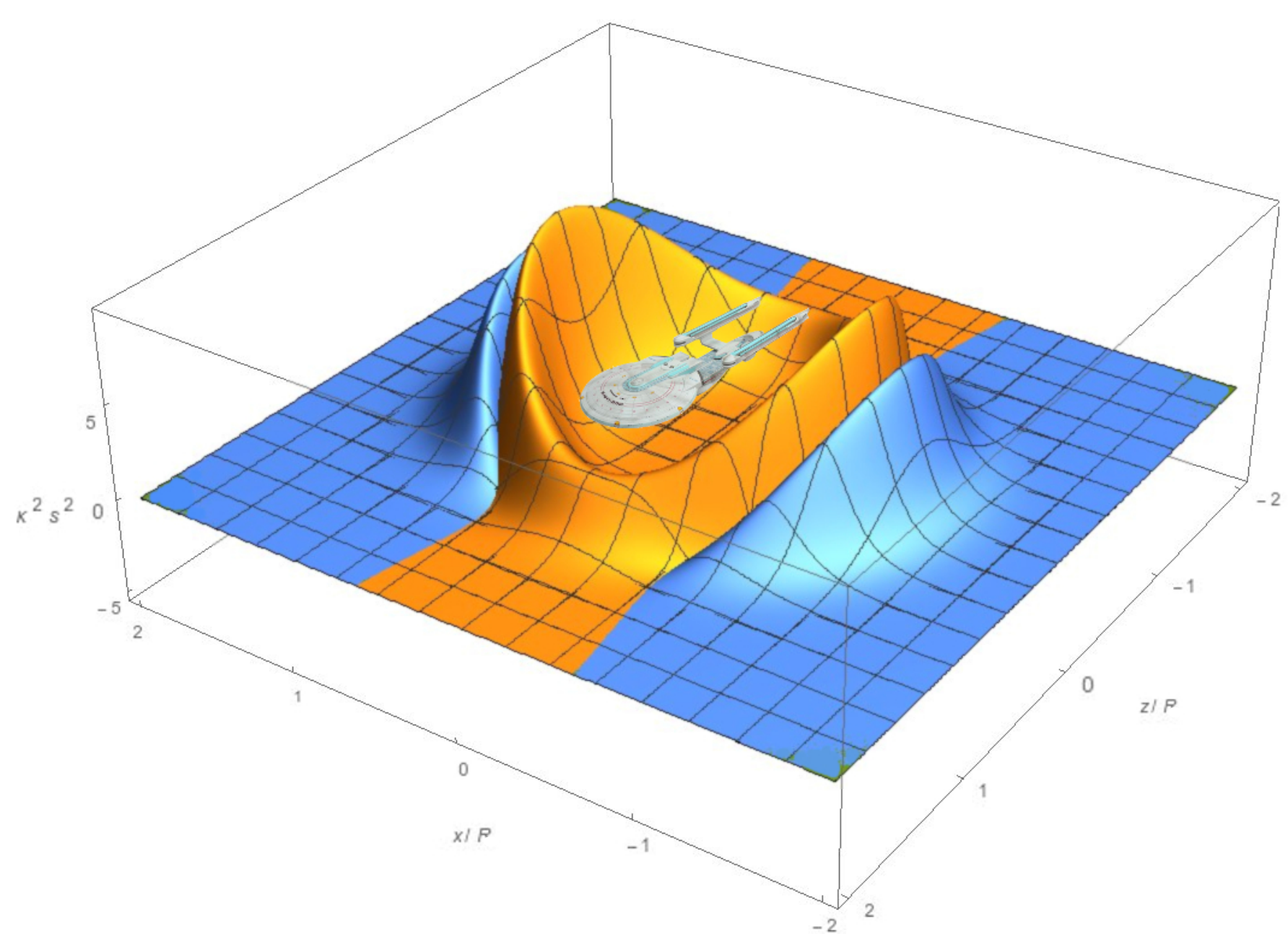}
\caption{{\small{A plot of the minimum spin required in order to have a WEC respecting warp drive for a ship velocity of $v_\mathrm{s}=1.2$.
The orange surface represents the result for $\beta\rightarrow +\infty$ and the blue the result for $\beta\rightarrow -\infty$. (See main text for details).}}}
\label{fig:alcussquared}
\end{center}
\end{figure}

\begin{figure}[h!t]
\begin{center}
\includegraphics[width=3in]{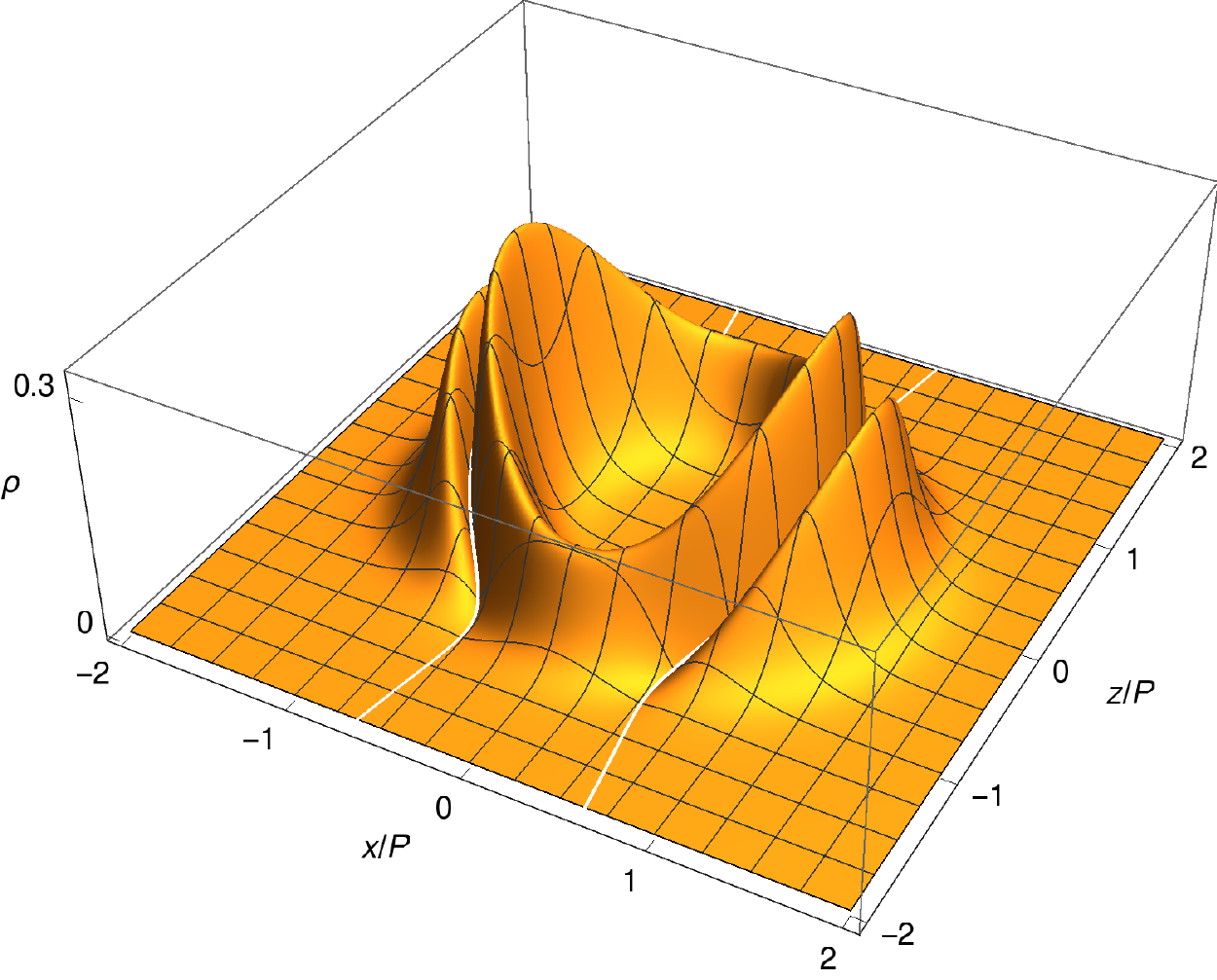}
\caption{{\small{The energy density as measured by Eulerian observers using the spin profile $s^{2}$ of figure \ref{fig:alcussquared}. (See main text for details).}}}
\label{fig:energydens}
\end{center}
\end{figure}

It might be interesting to consider the actual amount of spin required in order for this scheme to work. Let us consider the graph of figure \ref{fig:alcussquared} at the region with the highest spin density. Here $\kappa^{2}s^{2}$ is approximately $5$\,m$^{-2}$ at $x/P\approx 1$. Converting the corresponding value of $s$ to S.I. units yields an angular momentum density of approximately $\mathcal{L}\approx 3.6\times 10^{34}$kg/(m$\cdot$sec), or in units of $\hbar$, $\mathcal{L}\approx 3.5\times 10^{68}$ $\hbar/$m$^{3}$. For the sake of simplicity, and for a very crude approximation, let us assume that the spin density is sourced by particles whose spin is of the order $\hbar$. This spin density then corresponds to $\mathcal{N}=3.5\times 10^{68}$ of such particles per cubic meter. As another approximation, the energy density, as measured by Eulerian observers is given by $\mathcal{T}_{\mu\nu}u^{\mu}u^{\nu}$, where $u^{\mu}$ is of the form (\ref{eq:4vel}). This energy density is plotted in figure \ref{fig:energydens}. Near the maximum of this plot, also near $x/P =1$, this energy density is approximately $\rho_{0}=0.3$ inverse-square meter, which corresponds to $3.64\times 10^{43}$\, J/m$^{3}$, or a mass density equivalent of $4\times 10^{26}$\, kg/m$^{3}$. (As is often the case with exotic solutions in gravitational field theory, the devil is in the details.) If the field sourcing the spin is, for example, a monochromatic photon field\footnote{Coupling photons to torsion will, however, come at the expense of losing $U(1)$ gauge invariance.} (algebraic class of an anisotropic fluid) of frequency $\omega$, then this energy density corresponds to an electric field, $E$, of approximate magnitude
\begin{equation}
 E=\sqrt{\frac{2}{\epsilon_{0}} \hbar \omega \mathcal{N}} \approx 9\times 10^{22} \sqrt{\omega}\; \mbox{N/C}\,, \label{eq:elecfield}
\end{equation}
where the free-space permittivity is $\epsilon_{0}=8.85\times 10^{-12}\,$Farad/m. Of course, this is assuming that all of the stress-energy present at this point is due to the photon field itself, which for this illustrative purpose neglects the presence of the auxiliary matter discussed previously. $3.5\times 10^{68}$ photons per cubic meter, having an energy density of $3.64\times 10^{43}$ Joules per cubic meter, would possess a wavelength of approximately $1.8$ meters, which is not unreasonable, although somewhat large given the dimensions of the warp bubble for this example. Looking to other possible field sources, composite particles of high spin are unlikely candidates as they are found in unstable resonances. They also tend to carry too much energy due to their mass in order for them to be feasible in this scenario, as too much energy will tend increase the amount of spin required in order to equate the left and right-hand sides of the gravitational field equations. Fundamental high spin states are problematic in that it is difficult to describe point-like interactions of high spin in a consistent manner within a field theoretic framework, even in the massless case, although it seems within certain extensions of the standard theory it may be possible (see \cite{ref:klish} and \cite{ref:foto} for a summary of the issues and possible resolutions). One could perhaps utilize the arbitrarily high spin states allowed within the realm of string theory, although for low dimension these modes must be massive.

Even though the spin density is rather high, the overall amount of spin and energy required could be made much smaller. This was the motivation behind the modification to the warp drive by Van Den Broeck \cite{ref:vanden}, which modifies the geometry in such a way as to minimize the amount of matter required for the warp drive.

\subsection{The modified warp drive}\label{sec:vanden}
Here we will briefly discuss the Van Den Broeck warp drive \cite{ref:vanden} in light of \ectend. The method here mimics the analysis of the traditional warp drive above and so we only present the metric and the results. The Van Den Broeck warp drive minimizes the amount of exotic matter required by modifying the spacetime so that the volume in which the ship is located is bounded by a small area. In other words, a small warp bubble surrounds a throat leading to an approximately flat region with large volume. The line element is given by \cite{ref:vanden}
\begin{equation}
{\mathrm d}s^{2}= -{\mathrm d}t^{2}  +B^{2}(r_{\rm{s}})\left[\left({\mathrm d}z-v_{\rm{s}}(t) f(r_{\rm{s}}){\mathrm d}t\right)^{2} + {\mathrm d}x^{2} + {\mathrm{d}}y^{2} \right]\,. \label{eq:vandenmet}
\end{equation}
The Christoffel-Einstein tensor for this metric is also presented in the Appendix and again we consider the constant velocity scenario.
For the functions appearing in (\ref{eq:vandenmet}) we will make the same assumptions as in \cite{ref:vanden}. That is
\begin{align}
 B(r_{\rm{s}})= 1+\alpha, & \quad \text{for  } r_{\rm{s}} < \tilde{P}\,, \nonumber  \\ 
 1 < B(r_{\rm{s}}) \leq 1+ \alpha, & \quad \text{for  }  \tilde{P} \leq r_{\rm{s}} < \tilde{P} +\tilde{\Delta}\,,  \\
 B(r_{\rm{s}})=1, & \quad \text{for  } \tilde{P}+\tilde{\Delta} \leq r_{\rm{s}} \,, \nonumber 
\end{align}
with $P > \tilde{P}+\tilde{\Delta}$. The quantity $\tilde{\Delta}$ represents the coordinate thickness of the transition domain between the large volume inner region and the region which mimics the traditional warp drive. We use a function similar to (\ref{eq:alcuf}) ($B=1+f$), but with different parameters ($P\rightarrow \tilde{P}$), in order to model $B(r_{\rm{s}})$. Using the above metric one calculates the quantity (\ref{eq:econd2}) and, again considering longitudinally boosted observers (\ref{eq:zvel}), we set the value of $s$ by requiring that WEC violation is canceled for the most severe scenario. In this case it turns out that at certain points in the spacetime there are relevant extrema at the $\beta$ boundary, $\beta \rightarrow \pm\infty$, as well as at intermediate values of $\beta$ for other regions. We plot these values of $\kappa^{2}s^{2}$ in the vicinity of the warp bubble in figure \ref{fig:vdbspin}. The values chosen are as follows:
\begin{equation}
 P=3\,\mbox{fm}, \quad \tilde{P}=1\,\mbox{fm}, \quad \alpha =5, \quad \sigma =8\,\mbox{fm}^{-1}\,. \label{vdbparams}
\end{equation}
This actually corresponds to a tiny vessel, and the values are chosen only because they are useful for our purposes of analysis. This corresponds to a tiny warp bubble, whose inner-volume is admittedly not practical but it yields an idea of the spin densities required for a non-extreme case (one whose area-volume ratio is rather mild). We use femtometers here since the idea behind the modified warp drive is to have as small a WEC violating region as possible.

\begin{figure}[h!t]
\begin{center}
\includegraphics[width=3in, keepaspectratio=true]{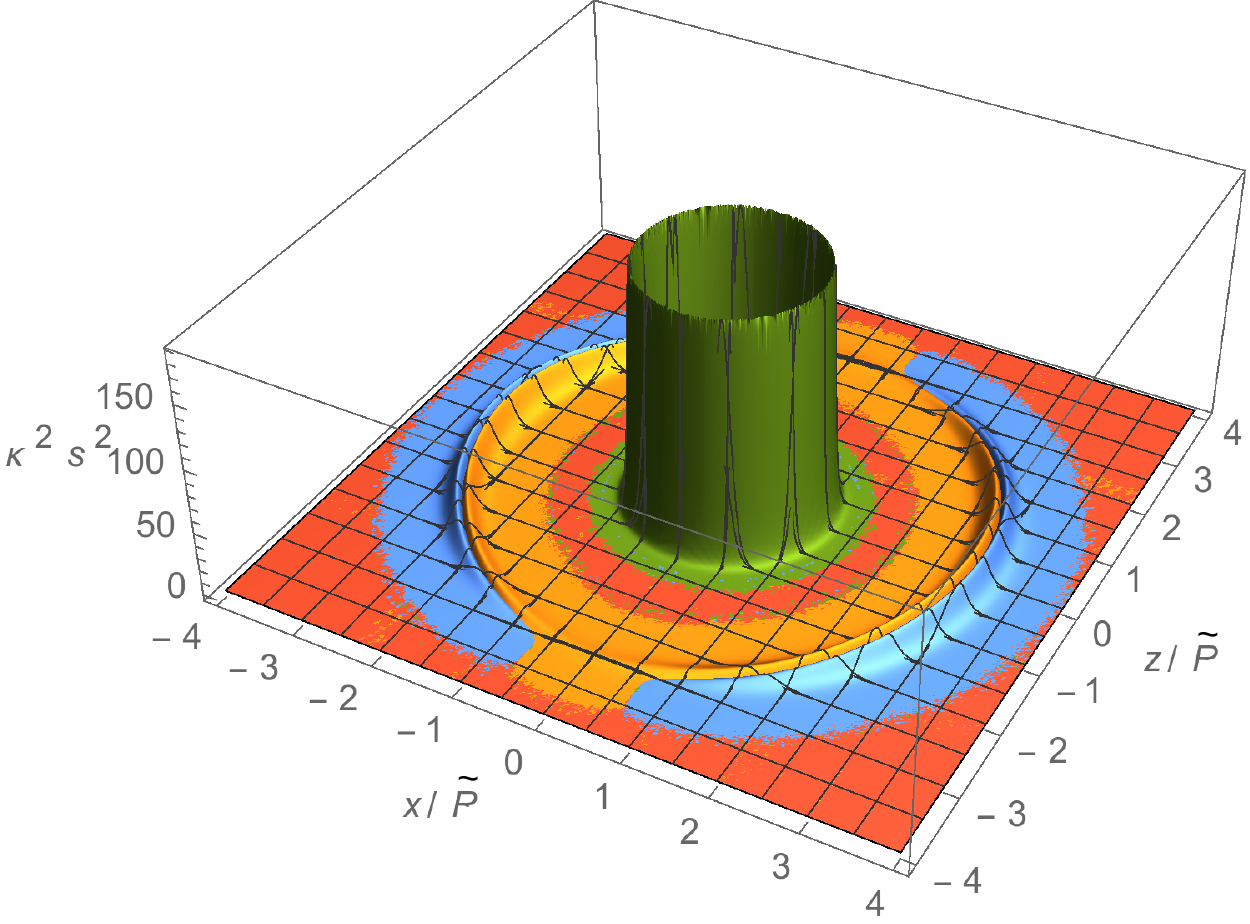}
\caption{{\small{A plot of the minimum spin required in order to have a WEC respecting modified warp drive for a ship velocity of $v_\mathrm{s}=1.2$.
The different colors represent the different values of $\beta$ which were used (boundary points and intermediate).}}}
\label{fig:vdbspin}
\end{center}
\end{figure}

From the plot in figure \ref{fig:vdbspin} it can be noted that the maximum value of $\kappa^{2}s^{2}$ is approximately 150. This in turn corresponds to an angular momentum density of approximately $\mathcal{L}\approx 1.9 \times 10^{35}$\, kg/(fm$\cdot$s) or $1.87 \times 10^{39}$\,$\hbar/$fm$^{3}$. This is of the order of $10^{84}$ spin-1 particles per cubic meter. Although the spin density is extremely large, one might minimize the net amount of spin required due to the design of this spacetime. However, the issue of exactly how to support this spin density is subject to similar comments as made for the traditional warp drive above.

Finally, we show the energy density as measured by Eulerian observers for this scenario in figure \ref{fig:vdbedense}. The approximate value here, $\rho_{\mbox{\tiny{max}}}\approx 0.6$\,fm$^{-2}$, corresponds to approximately $8\times 10^{11}$\, kg/fm$^{3}$ of mass density.

\begin{figure}[h!t]
\begin{center}
\includegraphics[width=3in, keepaspectratio=true]{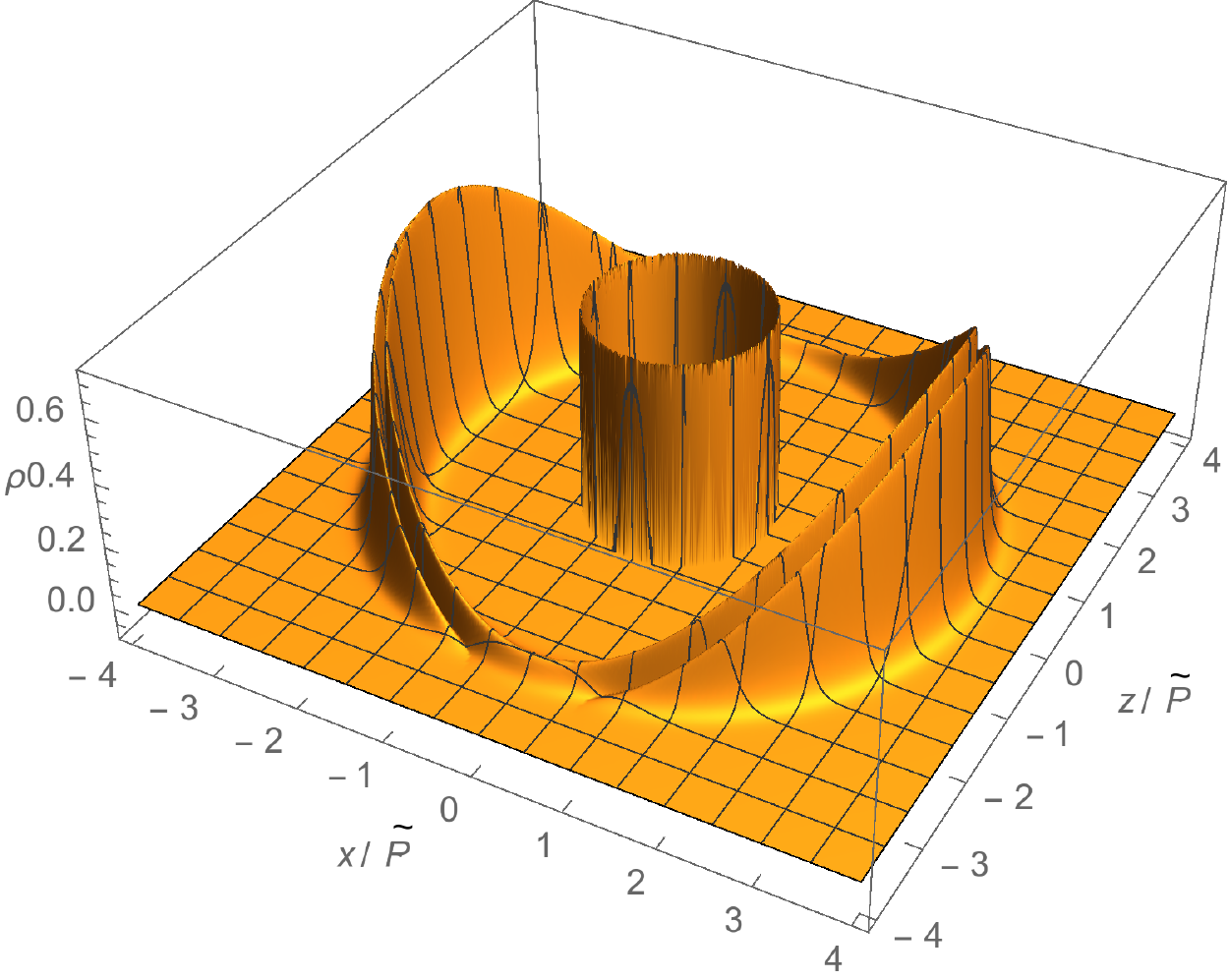}
\caption{{\small{The energy density as measured by Eulerian observers using the spin profile $s^{2}$ of figure \ref{fig:vdbspin} for the modified warp drive.}}}
\label{fig:vdbedense}
\end{center}
\end{figure}

In closing we should note that although in principle it is possible to create an energy condition respecting warp drive within \ectend, there are other issues with warp drive spacetimes which we did not address. For example, it has been shown that warp drive spacetimes contain an effective horizon, preventing the ship from possessing causal contact with points of the warp bubble when the effective speed becomes superluminal \cite{ref:krasnik}. This issue may be alleviated somewhat as it has been shown that parts of the bubble are still causally connected to the control region \cite{ref:causcon}. As well, when semiclassical effects are taken into account it has been shown that radiation build up in the warp drive spacetime may become large enough to render it unstable \cite{ref:semiclass}, at least at the semiclassical level. A full quantum analysis would require a theory of quantum gravity.

\section{{\small Concluding remarks}}\label{sec:conc}
It has been shown how, within the paradigm of \ectend, an energy condition respecting warp drive may exist, where no such counterpart is present in curvature-only general relativity. The Weyssenhoff fluid was utilized to calculate the spin density contribution, along with auxiliary structure to the matter which allows the stress energy tensor to be algebraically compatible with the warp drive spacetime. With the addition of spin as a source of gravity, the matter field supporting the warp bubble can indeed respect the weak energy condition inequality, and by extension the null energy condition. An attempt was made in order to minimize the amount of spin required in order to have WEC non-violation. For reasonable values of the parameters, we find that rather large spin angular momentum densities are required per cubic meter. The Van Den Broeck warp drive requires a much higher density, although in this case the spin distribution is located within a much smaller area, and therefore the net overall spin required may be less. Admittedly, these are rather large values and it is difficult to imagine how such spin densities may be achieved. However, the study does illustrate that in principle WEC violation may be alleviated in warp drive spacetimes within \ectend. It also serves to show that solutions which are considered exotic in general relativity may be less peculiar within theories where spacetime torsion exists as an extra degree of freedom.  The effects of torsion may therefore be non trivial in extreme scenarios.

\section*{{\small Acknowledgments}}
We are grateful to M. Sossich for stimulating discussions. AD would like to acknowledge the kind hospitality of FER, University of Zagreb, where part of this work was carried out. This work was partially supported by the VIF program of the University of Zagreb.

\PRLsep
\appendix
\setcounter{equation}{0}
\renewcommand\theequation{A.\arabic{equation}}

\section*{Appendix - The Christoffel-Einstein tensor ${\bm{G^{\mu\nu}\left({\tiny{\christoffel{\alpha}{\beta}{\gamma}}}\right)}}$}
We present here all the unique components of the Christoffel-Einstein tensor $G^{\mu\nu}\left({\tiny{\christoffel{\alpha}{\beta}{\gamma}}}\right)$ which appear in (\ref{eq:goodeom}), (\ref{eq:ssquaredeqn}), and which is used to calculate the weak/null energy condition violation in the general relativity limit, $G_{\mu\nu}\left({\tiny{\christoffel{\alpha}{\beta}{\gamma}}}\right)v^{\mu}v^{\nu}/\kappa$. We omit the explicit Christoffel connection dependence here.

\subsection*{The traditional warp drive}

The line element is
\begin{equation}
 {\rm d}s^{2}=-{\rm d}t^{2} +\left[{\rm d}z -v_{\rm{s}}(t) f(x,y,z-z_{\rm{s}}(t))\,{\rm d}t\right]^{2} + {\rm d}x^{2} + {\rm d}y^{2}\,.
\end{equation}

\begin{subequations} 
\romansubs
{\allowdisplaybreaks\begin{align}
G^{tt}=&\Scale[0.90]{-1/4\, \left( {v_{\rm{s}}}  
 \right) ^{2} \left(  \left( {\frac {\partial }{\partial x}}f \right) ^{2}+ \left( {\frac {\partial }{\partial y}}f
   \right) ^{2} \right) \,} \\[0.1cm]
G^{tz}=&\Scale[0.90]{-1/4\, \left( {v_{\rm{s}}}  
\right)  \left(  \left( {v_{\rm{s}}}   \right) ^{2}f   \left( {\frac {
\partial }{\partial x}}f   \right) ^{2}+ \left( 
{v_{\rm{s}}}   \right) ^{2}f
  \left( {\frac {\partial }{\partial y}}f
  \right) ^{2}+2\,{\frac {\partial ^{2}}{
\partial {y}^{2}}}f  +2\,{\frac {\partial ^{2}}{
\partial {x}^{2}}}f   \right)\,} \\[0.1cm]
G^{ty}=&\Scale[0.90]{1/2\, \left( {v_{\rm{s}}}  
\right) {\frac {\partial ^{2}}{\partial z\partial y}}f}  \, \\[0.1cm]
G^{tx}=&\Scale[0.90]{1/2\, \left( {v_{\rm{s}}}  
\right) {\frac {\partial ^{2}}{\partial z\partial x}}f  \,} \\[0.1cm]
G^{zz}=&\Scale[0.90]{-1/4\, \left( {v_{\rm{s}}}  
\right) ^{2} \left(  \left( {v_{\rm{s}}}   \right) ^{2} f ^{2}
\left( {\frac {\partial }{\partial x}}f  
\right) ^{2}+ \left( {v_{\rm{s}}}  \right) ^{2} \left( f \right)^{2}
\left( {\frac {\partial }{\partial y}}f  
\right) ^{2} \right.}  \nonumber\\
&\Scale[0.90]{+\left. 4\,f  {\frac {\partial ^{2}}{
\partial {y}^{2}}}f  +4\,f {\frac {\partial ^{2}}{\partial {x}^{2}}}f  +3\, \left( {\frac {\partial }{\partial x}}f   \right) ^{2}+3\, \left( {\frac {\partial }{\partial y}}f   \right) ^{2} \right) \, ,} \\[0.1cm]
G^{zy}=&\Scale[0.90]{1/2\, \left( {\frac {{\rm d}}{{\rm d}{t}}}{v_{\rm{s}}}   \right) {\frac {\partial }{\partial y}}f  +1/2\, \left( {v_{\rm{s}}}   \right) {\frac {\partial ^{2}}{\partial y\partial t}}f + \left( {\frac {\partial }{\partial z}}f   \right)  v_{\rm{s}} ^{2}{\frac {\partial }{\partial y}}f  +f   \left( v_{\rm{s}}\right) ^{2}{\frac { \partial ^{2}}{\partial z\partial y}}f  \,, } \\[0.1cm]
G^{zx}=&\Scale[0.90]{1/2\, \left( {\frac {{\rm d}}{{\rm d}{t}}}{v_{\rm{s}}}   \right) {\frac {\partial }{\partial x}}f  +1/2\, \left( {v_{\rm{s}}}   \right) {\frac {\partial ^{2}}{\partial x\partial t}}f
 + \left( {\frac {\partial }{\partial z}}f
  \right)  v_{\rm{s}} ^{2}{\frac {\partial }{\partial x}}f
 +f   \left( {v_{\rm{s}}}   \right) ^{2}{\frac {
\partial ^{2}}{\partial z\partial x}}f  \, ,} \\[0.1cm]
G^{yy}=&\Scale[0.90]{1/4\, \left( {\frac {\partial }{\partial y}}f  
\right) ^{2} \left( {v_{\rm{s}}}   \right) ^{2}-f   v_{\rm{s}} ^{2}{\frac {\partial ^{2
}}{\partial {z}^{2}}}f  -1/4\, \left( {\frac {
\partial }{\partial x}}f   \right) ^{2} \left( {
v_{\rm{s}}}   \right) ^{2}} \nonumber \\
&\Scale[0.90]{-\left( {\frac {{\rm d}}{{\rm d}{t}}}{v_{\rm{s}}}   \right) {\frac {\partial }{\partial z}}f - \left( {v_{\rm{s}}}  
\right) {\frac {\partial ^{2}}{\partial z\partial t}}f  - \left( {\frac {\partial }{\partial z}}f   \right) ^{2} \left( {v_{\rm{s}}}
  \right) ^{2}\,,} \\[0.1cm]
G^{yx}=&\Scale[0.90]{1/2\, \left( {\frac {\partial }{\partial y}}f  
\right)  \left( {v_{\rm{s}}}  
\right) ^{2}{\frac {\partial }{\partial x}}f \, ,} \\[0.1cm]
G^{xx}=&\Scale[0.90]{1/4\, \left( {\frac {\partial }{\partial x}}f  
\right) ^{2} \left( {v_{\rm{s}}}   \right) ^{2}-f   v_{\rm{s}} ^{2}{\frac {\partial ^{2
}}{\partial {z}^{2}}}f  -1/4\, \left( {\frac {
\partial }{\partial y}}f   \right) ^{2} \left( {
v_{\rm{s}}}   \right) ^{2} } \nonumber \\
&\Scale[0.90]{- \left( {\frac {{\rm d}}{{\rm d}{t}}}{v_{\rm{s}}}   \right) {\frac {\partial }{\partial z}}f  - \left( {v_{\rm{s}}}  
\right) {\frac {\partial ^{2}}{\partial z\partial t}}f  - \left( {\frac {\partial }{\partial z}}f   \right) ^{2} \left( {v_{\rm{s}}}
  \right) ^{2}\, .}
\end{align}}
\end{subequations}

\subsection*{The modified warp drive}

The line element is
\begin{equation}
 {\mathrm d}s^{2}= -{\mathrm d}t^{2}  +B^{2}(r_{\rm{s}})\left[\left({\mathrm d}z-v_{\rm{s}} f(r_{\rm{s}}){\mathrm d}t\right)^{2} + {\mathrm d}x^{2} + {\mathrm{d}}y^{2} \right]\,.
\end{equation}
Due to the complexity of the resulting expressions, we set $v_{\rm{s}}=\mbox{const.}$ right away here.

\begin{subequations} 
\romansubs
{\allowdisplaybreaks\begin{align}
G^{tt}=& \Scale[0.90]{-\frac{1}{4 B^{4}}\,\left[-12\, B ^{2}
 f ^{2} \left( {\frac {\partial }{\partial z}}B   \right) ^{2}{v_{\rm{s}}}^{2}-8\, \left( B
   \right) ^{3}f   \left( {\frac {\partial }{\partial z}}B   \right)  \left( {\frac {\partial }{\partial z}}f   \right) {v_{\rm{s}}}^{2}+ \left( {\frac {\partial }{
\partial y}}f   \right) ^{2} B ^{4}{v_{\rm{s}}}^{2} \right.} \nonumber \\
&\Scale[0.90]{\left. + B ^{4} \left( {\frac {\partial }{\partial x}}f   \right) ^{2}{v_{\rm{s}}}^{2}-24\,v_{\rm{s}}\,f   B ^{2} \left( {
\frac {\partial }{\partial z}}B   \right) {\frac {\partial }{\partial t}}B  -8\, \left( B
   \right) ^{3} \left( {\frac {\partial }{\partial t}}B   \right)  \left( {\frac {
\partial }{\partial z}}f   \right) v_{\rm{s}}-12\, B ^{2} \left( {\frac {
\partial }{\partial t}}B   \right) ^{2} \right.} \nonumber \\
&\Scale[0.90]{ \left. +8\,B  {\frac {\partial ^{2}}{\partial {y}^{2}}}B
  +8\,B  {\frac {\partial ^{2}}{\partial {z}^{2}}}B  -4\, \left( {\frac {
\partial }{\partial x}}B   \right) ^{2}+8\,B  {\frac {\partial ^{2}}{\partial {x}^{2}}}B
  -4\, \left( {\frac {\partial }{\partial y}}B   \right) ^{2}-4\, \left( {\frac {\partial }{
\partial z}}B   \right) ^{2}\right]\, ,} \\[0.1cm]
G^{tz}=&\Scale[0.90]{\frac{1}{4 
 B ^{4} }\,\left[12\, B ^{2}
 f ^{3} \left( {\frac {\partial }{\partial z}}B   \right) ^{2}{v_{\rm{s}}}^{3}+8\, \left( B 
   \right) ^{3} f ^{2} \left( {\frac {\partial }{
\partial z}}B   \right)  \left( {\frac {\partial }{\partial z}}f   \right) {v_{\rm{s}}}^{
3}-f   \left( {\frac {\partial }{\partial y}}f   \right) ^{2} \left( B  
 \right) ^{4}{v_{\rm{s}}}^{3} \right.} \nonumber \\
 &\Scale[0.90]{\left. - B ^{4}f   \left( {\frac {\partial }{
\partial x}}f   \right) ^{2}{v_{\rm{s}}}^{3}+24\, B ^{2} \left( f 
  \right) ^{2} \left( {\frac {\partial }{\partial z}}B   \right)  \left( {\frac {\partial }{\partial t}}B
   \right) {v_{\rm{s}}}^{2}+8\, B ^{3}f   \left( {\frac {
\partial }{\partial t}}B   \right)  \left( {\frac {\partial }{\partial z}}f   \right) 
{v_{\rm{s}}}^{2} \right.} \nonumber \\
&\Scale[0.90]{ \left. +12\, B ^{2}f   \left( {\frac {\partial }{\partial t}}B 
  \right) ^{2}v_{\rm{s}}-2\, B ^{2}v_{\rm{s}}\,{\frac {\partial ^{2}}{\partial {y}^{2}}}f
  -8\,B  v_{\rm{s}}\,f  {\frac {\partial ^{2}}{\partial {y}^{2}}}B
  \right.} \nonumber \\
  & \Scale[0.90]{\left. +4\, \left( {\frac {\partial }{\partial x}}B   \right) ^{2}v_{\rm{s}}\,f  -8\,B  v_{\rm{s}}\,f  {\frac {\partial ^{2}}{\partial {x}^{2}}}B  +4\, \left( {\frac {\partial }{\partial y}}B 
   \right) ^{2}v_{\rm{s}}\,f  -4\, \left( {\frac {\partial }{\partial z}}B   \right) ^{2}{
v_{\rm{s}}}\,f  -6\,B  v_{\rm{s}}\, \left( {\frac {\partial }{\partial x}}f  
 \right) {\frac {\partial }{\partial x}}B \right.} \nonumber \\
 &\Scale[0.90]{ \left. -6\,B  v_{\rm{s}}\, \left( {\frac {\partial }{
\partial y}}f   \right) {\frac {\partial }{\partial y}}B  -2\, B ^{2}v_{\rm{s}}\,{\frac {\partial ^{2}}{\partial {x}^{2}}}f  +8\,B  {\frac {\partial ^{2}}{\partial z\partial t}}B  -8\,
 \left( {\frac {\partial }{\partial t}}B   \right) {\frac {\partial }{\partial z}}B  \right]\, ,} \\[0.1cm]
 G^{ty}=&\Scale[0.90]{\frac{1}{2 B ^{4}}\,\left[4\,B  v_{\rm{s}}\,f  {\frac {\partial ^{2}}{\partial z\partial y}}B  + B ^{2}v_{\rm{s}}\,{\frac {\partial ^{2}}{\partial z\partial y}}f  -
4\, \left( {\frac {\partial }{\partial z}}B   \right) v_{\rm{s}}\,f  {\frac {\partial }{
\partial y}}B  +B  v_{\rm{s}}\, \left( {\frac {\partial }{\partial y}}f  
 \right) {\frac {\partial }{\partial z}}B \right.} \nonumber \\
 &\Scale[0.90]{ \left. +2\,B  v_{\rm{s}}\, \left( {\frac {\partial }{
\partial z}}f   \right) {\frac {\partial }{\partial y}}B  +4\,B  {
\frac {\partial ^{2}}{\partial y\partial t}}B  -4\, \left( {\frac {\partial }{\partial t}}B  
 \right) {\frac {\partial }{\partial y}}B  \right]\,,} \\[0.1cm]
 G^{tx}=& G^{ty} \quad x \leftrightarrow y \, , \\[0.1cm]
 G^{zz}=&\Scale[0.90]{-\frac{1}{4 B ^{6} }\,\left[-8\, f ^{2}
 B ^{5} \left( {\frac {\partial }{\partial t}}B   \right)  \left( {
\frac {\partial }{\partial z}}f   \right) {{v_{\rm{s}}}}^{3}-12\, f ^{2} B ^{4} \left( {\frac {\partial }{\partial t}}B   \right) ^{2}{{v_{\rm{s}}}}^{2}+4\,
 B ^{4}f  {{v_{\rm{s}}}}^{2}{\frac {\partial ^{2}}{\partial {y}^{2}}}f \right.} \nonumber \\
 &\Scale[0.90]{\left.  +8\, B ^{3} f ^{2}{{v_{\rm{s}}}}^{2}{
\frac {\partial ^{2}}{\partial {y}^{2}}}B  -4\, B ^{2}{{v_{\rm{s}}}}^{2} 
f ^{2} \left( {\frac {\partial }{\partial x}}B   \right) ^{2}+8\, 
B ^{3} f ^{2}{{v_{\rm{s}}}}^{2}{\frac {\partial ^{2}}{\partial {x}^{2}}}B
  -4\, f ^{2} B ^{2} \left( {\frac {
\partial }{\partial y}}B   \right) ^{2}{{v_{\rm{s}}}}^{2} \right.} \nonumber \\
& \Scale[0.90]{\left. +4\, B ^{4}f 
 {{v_{\rm{s}}}}^{2}{\frac {\partial ^{2}}{\partial {x}^{2}}}f  -12\, \left( f  
 \right) ^{4} B ^{4} \left( {\frac {\partial }{\partial z}}B   \right) ^{2}{{
v_{\rm{s}}}}^{4}+ f ^{2} \left( {\frac {\partial }{\partial y}}f   \right) ^{2}
 B ^{6}{{v_{\rm{s}}}}^{4}+ f ^{2} \left( {\frac {\partial }{
\partial x}}f   \right) ^{2} B ^{6}{{v_{\rm{s}}}}^{4} \right.} \nonumber \\
& \Scale[0.90]{\left. -8\, f ^{3} 
B ^{5} \left( {\frac {\partial }{\partial z}}B   \right)  \left( {\frac {\partial }{\partial z}}f 
  \right) {{v_{\rm{s}}}}^{4}-24\, f ^{3} B ^{4}
 \left( {\frac {\partial }{\partial t}}B   \right)  \left( {\frac {\partial }{\partial z}}B   \right) {{v_{\rm{s}}}}^{3}+12\, B ^{3}{{v_{\rm{s}}}}^{2}f   \left( {\frac {\partial }{\partial x}}B   \right) {
\frac {\partial }{\partial x}}f \right.} \nonumber \\
&\Scale[0.90]{ \left. +12\,f   \left( {\frac {\partial }{\partial y}}f   \right)  
B ^{3} \left( {\frac {\partial }{\partial y}}B  
 \right) {{v_{\rm{s}}}}^{2}+8\, B ^{3}{\frac {\partial ^{2}}{\partial {t}^{2}}}B  
+4\, \left( {\frac {\partial }{\partial x}}B   \right) ^{2}-4\,B  {\frac {\partial ^{2}}{
\partial {x}^{2}}}B  +4\, \left( {\frac {\partial }{\partial y}}B   \right) ^{2} \right.} \nonumber \\
&\Scale[0.90]{ \left. -4\,B
  {\frac {\partial ^{2}}{\partial {y}^{2}}}B  -4\, \left( {\frac {\partial }{\partial z}}B
   \right) ^{2}+24\,{v_{\rm{s}}}\,f   B ^{2} \left( {\frac 
{\partial }{\partial z}}B   \right) {\frac {\partial }{\partial t}}B  +16\, B ^{2} 
f ^{2} \left( {\frac {\partial }{\partial z}}B   \right) ^{2}{{v_{\rm{s}}}}^{2} \right.} \nonumber \\
& \Scale[0.90]{\left. +8\, \left( B 
  \right) ^{3} \left( {\frac {\partial }{\partial z}}B   \right)  \left( {\frac {\partial }{\partial t}}f
   \right) {v_{\rm{s}}}+4\, B ^{2} \left( {\frac {\partial }{\partial t}}B   \right) ^{2}+8\, B ^{3}f   \left( {\frac {\partial }{
\partial z}}B   \right)  \left( {\frac {\partial }{\partial z}}f   \right) {{v_{\rm{s}}}}^{
2}+3\, B ^{4} \left( {\frac {\partial }{\partial x}}f   \right) ^{2}{{v_{\rm{s}}}}^{2} \right.} \nonumber \\
&\Scale[0.90]{ \left. +3\, \left( {\frac {\partial }{\partial y}}f   \right) ^{2} B ^{4}{{v_{\rm{s}}}}^{2}\right]\, ,}\\[0.1cm]
G^{zy}=&\Scale[0.90]{\frac{1}{2 B ^{6} }\,\left[4\, B ^{3} f ^{2}{{v_{\rm{s}}}}^{2}{\frac {\partial 
^{2}}{\partial z\partial y}}B  +2\, B ^{4}f  
{{v_{\rm{s}}}}^{2}{\frac {\partial ^{2}}{\partial z\partial y}}f  -4\, f ^{2} B ^{2} \left( {\frac {\partial }{\partial y}}B   \right)  \left( {\frac {\partial }{\partial z}}B   \right) {{v_{\rm{s}}}}^{
2} \right.} \nonumber \\
& \Scale[0.90]{\left. +4\, B ^{3}{{v_{\rm{s}}}}^{2}f   \left( {\frac {\partial }{\partial z}}B   \right) {\frac {\partial }{\partial y}}f  +2\,f   B ^{3} \left( {\frac {\partial }{\partial y}}B
   \right)  \left( {\frac {\partial }{\partial z}}f   \right) {{v_{\rm{s}}}}^{2}+2\, \left( {\frac 
{\partial }{\partial y}}f   \right)  B ^{4} \left( {\frac {\partial }{
\partial z}}f   \right) {{v_{\rm{s}}}}^{2} \right.} \nonumber \\
&\Scale[0.90]{ \left. + B ^{4}{v_{\rm{s}}}\,{\frac {\partial ^{2}
}{\partial y\partial t}}f  +4\, B ^{3}{v_{\rm{s}}}\,f  {
\frac {\partial ^{2}}{\partial y\partial t}}B  -4\,f   B 
^{2} \left( {\frac {\partial }{\partial y}}B   \right)  \left( {\frac {\partial }{\partial t}}B   \right) {v_{\rm{s}}}+3\, 
B ^{3} \left( {\frac {\partial }{\partial t}}B   \right) {v_{\rm{s}}}\,{\frac {\partial }{\partial y}}f \right.} \nonumber \\
 & \Scale[0.90]{\left. -2\,B  {\frac {\partial ^{2}}{\partial z\partial y}}B  +4\, \left( {\frac {
\partial }{\partial z}}B   \right) {\frac {\partial }{\partial y}}B  \right]\,} \\[0.1cm]
G^{zx}=&G^{zy} \quad x \leftrightarrow y \, , \\[0.1cm]
G^{yy}=&\Scale[0.90]{-\frac{1}{4 B^{6}}\,\left[4\, B ^{4}f  {{v_{\rm{s}}}}^{2}{\frac {\partial ^{2}}{
\partial {z}^{2}}}f  +8\, B ^{3} f ^{2}{
{v_{\rm{s}}}}^{2}{\frac {\partial ^{2}}{\partial {z}^{2}}}B  +4\, B ^{2} \left( f
   \right) ^{2} \left( {\frac {\partial }{\partial z}}B   \right) ^{2}{{v_{\rm{s}}}}^{2}+20\,
 B ^{3}f   \left( {\frac {\partial }{\partial z}}B   \right)  \left( {\frac {\partial }{\partial z}}f   \right) {{v_{\rm{s}}}}^{2} \right.} \nonumber \\
 & \Scale[0.90]{\left. - \left( {\frac {\partial }{
\partial y}}f   \right) ^{2} B ^{4}{{v_{\rm{s}}}}^{2}+ B ^{4} \left( {\frac {\partial }{\partial x}}f   \right) ^{2}{{v_{\rm{s}}}}^{2}+4\, B ^{4} \left( {\frac {\partial }{\partial z}}f
   \right) ^{2}{{v_{\rm{s}}}}^{2}+4\, B ^{4}{v_{\rm{s}}}\,{\frac {\partial ^{2}}
{\partial z\partial t}}f  +16\, B ^{3}{v_{\rm{s}}}\,f  {
\frac {\partial ^{2}}{\partial z\partial t}}B \right.} \nonumber \\
& \Scale[0.90]{\left. +8\,{v_{\rm{s}}}\,f   B ^{2} \left( {\frac {\partial }{\partial z}}B   \right) {\frac {\partial }{\partial t}}B  +12\, B ^{3} \left( 
{\frac {\partial }{\partial t}}B   \right)  \left( {\frac {\partial }{\partial z}}f  
 \right) {v_{\rm{s}}}+8\, B ^{3} \left( {\frac {\partial }{\partial z}}B  
 \right)  \left( {\frac {\partial }{\partial t}}f   \right) {v_{\rm{s}}}  +8\, \left( B  
 \right) ^{3}{\frac {\partial ^{2}}{\partial {t}^{2}}}B\right.}  \nonumber\\
 &\Scale[0.88]{\left.  +4\, B ^{2} \left( {
\frac {\partial }{\partial t}}B   \right) ^{2}-4\,B  {\frac {\partial ^{2}}{\partial {z}^{2}}}B
  +4\, \left( {\frac {\partial }{\partial x}}B   \right) ^{2}-4\,B  {
\frac {\partial ^{2}}{\partial {x}^{2}}}B  -4\, \left( {\frac {\partial }{\partial y}}B  
 \right) ^{2}+4\, \left( {\frac {\partial }{\partial z}}B   \right) ^{2}\right]\, ,} \\[0.1cm]
 G^{yx}=&\Scale[0.90]{\frac{1}{2  B ^{6}}\,\left[B ^{4}{{v_{\rm{s}}}}
^{2} \left( {\frac {\partial }{\partial x}}f  
 \right) {\frac {\partial }{\partial y}}f  +4\,
 \left( {\frac {\partial }{\partial y}}B  
 \right) {\frac {\partial }{\partial x}}B  \right.} { \left. -2\,B
  {\frac {\partial ^{2}}{\partial y\partial x}}B
 \right]\, ,} \\[0.1cm]
 G^{xx}=&G^{yy} \quad x \leftrightarrow y \, .
\end{align}}
\end{subequations}
\PRLsep

\vspace{-0.080cm}

\linespread{0.6}
\bibliographystyle{unsrt}

\begin{thebibliography}{10}
\setlength{\itemsep}{0mm}
{\small{

\bibitem{ref:perihelion}
C.M. Will,
\newblock{\em Ann. Phys.} {\bf 15} 19 (2006).

\bibitem{ref:expandingstart}
A. Friedmann,
\newblock{\em Z. Phys.} {\bf 10} 377 (1922). English translation: G.F.R. Ellis and H. van Elst {\em Gen. Rel. Grav.} {\bf 31} 1991 (1997).

\bibitem{ref:expandingend}
G. Lema\^{i}tre,
\newblock{\em Ann. Soc. Sci. Brux.} {\bf A47} 49 (1927). English translation: {\em Mon. Not. R. Astron. Soc.} {\bf 91} 483 (1931).

\bibitem{ref:willbook}
C.M. Will,
\newblock{\em Was Einstein Right? Putting General Relativity to the Test.} Basic Books; 2 edition, (1993).

\bibitem{ref:einstcart}
\'{E}. Cartan,
\newblock{\em C. R. Acad. Sci. (Paris)} {\bf 174} 593 (1922).

\bibitem{ref:einstcart2}
A. Trautman,
\newblock{in {\em Encyclopedia of Mathematical Physics.}} J.-P. Francoise, G.L. Naber and Tsou S.T. (eds.), Elsevier, Oxford (2006).

\bibitem{ref:ectbigbang}
N.J. Po\l{}awski,
\newblock{\em Astroph. J.} {\bf 832} 96 (2016).

\bibitem{ref:popinfl}
N.J. Po\l{}awski,
\newblock{\em Physics Letters} {\bf B694} 181 (2010).

\bibitem{ref:ziaie}
A.. Ziaie, P.V. Moniz, A. Ranjbar and H.R. Sepangi,
\newblock{\em Eur. Phys. J.} {\bf C74} 3154 (2014).

\bibitem{ref:hashemicollapse}
M. Hashemi, S. Jalalzadeh and Amir Hadi Ziaie,
\newblock{\em Eur. Phys. J.} {\bf C75} 53 (2015).

\bibitem{ref:bronwh}
K.A. Bronnikov and A.M. Galiakhmetov,
\newblock{\em  Grav. Cosmol.} {\bf 21} 283 (2015).

\bibitem{ref:ectwhsols}
M.R. Mehdizadeh and A.H. Ziaie,
\newblock{\em Phys. Rev.} {\bf D95} 064049 (2017).

\bibitem{ref:katkarmax}
L.N. Katkar,
\newblock{\em Int. J. Theor. Phys.} {\bf 54} 951 (2015).

\bibitem{ref:seitzproca}
M. Seitz,
\newblock{\em Class. Quant. Grav.} {\bf 3} 1265 (1986).

\bibitem{ref:ecdiracthesis}
S. Khanapurkar,
\newblock{\em The Einstein-Cartan-Dirac theory.} Thesis submitted IISER Pune, T.P. Singh supervisor (2018). 

\bibitem{ref:alcub}
M. Alcubierre,
\newblock{\em Class. Quant. Grav.} {\bf 11} L73 (1994).

\bibitem{ref:lobobook}
M. Alcubierre and F.S.N. Lobo,
\newblock{in {\em Wormholes, Warp Drives and Energy Conditions.}} F.S.N. Lobo (ed.), Springer, Cham, Switzerland (2017).

\bibitem{ref:weys}
J. Weyssenhoff and A. Raabe,
\newblock{\em Act. Phys. Polon.} {\bf 9} 7 (1947).

\bibitem{ref:weysstudystart}
J.R. Ray and L.L. Smalley,
\newblock{\em Phys. Rev.} {\bf D27} 15 (1983).

\bibitem{ref:obukweys}
Y.N. Obukhov and V.A Korotky,
\newblock{\em Class. Quant. Grav.} {\bf 4} 1633 (1987).

\bibitem{ref:krischweys}
J.P. Krisch and L.L. Smalley,
\newblock{\em Class. Quant. Grav.} {\bf 10} 995 (1993). 

\bibitem{ref:peixweys1}
G. de Berredo-Peixoto and E.A. de Freitas,
\newblock{\em Int. J. Mod. Phys.} {\bf A24} 1652 (2009).

\bibitem{ref:weysstudyend}
G. de Berredo-Peixoto and E.A. de Freitas,
\newblock{\em Class. Quant. Grav.} {\bf 26} 175015 (2009).

\bibitem{ref:morthorne}
M.S. Morris and K.S. Thorne,
\newblock{\em Am. J. Phys.} {\bf 56} 395 (1988).

\bibitem{ref:hiscock}
W.A. Hiscock,
\newblock{\em arXiv:physics/0211114} [physics.space-ph].

\bibitem{ref:godel}
K. G\"{o}del,
\newblock{\em Rev. Mod. Phys.} {\bf 21} 447 (1949).

\bibitem{ref:godelguide}
D. Sahdev, R. Sundararaman, and M.S. Modgil,
\newblock{\em arXiv:gr-qc/0611093} (2006).

\bibitem{ref:godelreview}
M. Buser, E. Kajari, and W.P. Schleich,
\newblock{\em New J. Phys.} {\bf 15} 013063 (2013). 

\bibitem{ref:frenkel}
J. Frenkel,
\newblock{\em Z. Phys.} {\bf 37} 243 (1926).

\bibitem{ref:frenkeldebate}
C.G. Boehmer and P. Bronowski,
\newblock{\em Ukr. J. Phys.} {\bf 55} 607 (2010).

\bibitem{ref:hehl}
F.W. Hehl, P. von der Heyde, and G.D. Kerlick,
\newblock{\em Rev. Mod. Phys.} {\bf 48} 393 (1976).

\bibitem{ref:gasperiniprl}
M. Gasperini,
\newblock{\em Phys. Rev. Let.} {\bf 56} 2873 (1986).

\bibitem{ref:noncontract}
J. Natario,
\newblock{\em Class. Quant. Grav.} {\bf 19} 1157 (2002).

\bibitem{ref:pereirabook}
R. Aldrovandi and J.G. Pereira,
\newblock{\em An Introduction to Geometrical Physics: Second Edition.} World Scientific, Singapore (2016).

\bibitem{ref:handwrittenequations}
V. De Sabbata and M. Gasperini,
\newblock{\em Introduction to Gravitation.} World Scientific, Singapore (1985).

\bibitem{ref:klish}
S.M. Klishevich,
\newblock{\em Int. J. Mod. Phys.} {\bf A15} 535 (2000).

\bibitem{ref:foto}
A. Fotopoulos and M. Tsulaia,
\newblock{\em Int. J. Mod. Phys.} {\bf A24} 1 (2009).

\bibitem{ref:vanden}
C. Van Den Broeck,
\newblock{\em Class. Quant. Grav.} {\bf 16} 3973 (1999).

\bibitem{ref:krasnik}
S.V.  Krasnikov,
\newblock{\em Phys.  Rev.} {\bf D57} 4760  (1998).

\bibitem{ref:causcon}
F. Loup, R. Held, D. Waite, E. Halerewicz, Jr., M. Stabno, M. Kuntzman, and R. Sims,
\newblock{\em arXiv:gr-qc/0202021}.

\bibitem{ref:semiclass}
S. Finazzi, S. Liberati, and C. Barcel\'{o},
\newblock{\em Phys. Rev.} {\bf D79} 124017 (2009).


}}

\end{thebibliography}

\end{document}